\DeclareMathOperator{\im}{im}
\DeclareMathOperator{\supp}{supp}
\theoremstyle{plain}
\newtheorem{theorem}{Theorem}
\newtheorem{lemma}{Lemma}
\newtheorem{example}{Example}
\theoremstyle{definition}
\newtheorem{definition}{Definition}
\newtheorem{remark}{Remark}
\DeclareMathOperator{\wt}{wt}
\DeclareMathOperator*{\argmin}{arg\,min}
\begin{document}

\title{Fault-Tolerant Logical Measurements via Homological Measurement}
\author{Benjamin Ide}
\thanks{Corresponding author: \href{mailto:ben.ide@xanadu.ai}{ben.ide@xanadu.ai}.}
\affiliation{Xanadu, Toronto, Ontario M5G 2C8, Canada}
\author{Manoj G.\ Gowda}
\affiliation{Xanadu, Toronto, Ontario M5G 2C8, Canada}
\author{Priya J.\ Nadkarni}
\affiliation{Xanadu, Toronto, Ontario M5G 2C8, Canada}
\author{Guillaume Dauphinais}
\affiliation{Xanadu, Toronto, Ontario M5G 2C8, Canada}

\date{\today}

\begin{abstract}
We introduce homological measurement, a framework for measuring the logical Pauli operators encoded in CSS stabilizer codes. The framework is based on the algebraic description of such codes as chain complexes. Protocols such as lattice surgery and some of its recent generalizations are shown to be special cases of homological measurement. Using this framework, we develop a specific protocol called \emph{edge expanded homological measurement} for fault-tolerant measurement of arbitrary logical Pauli operators of general qLDPC codes, requiring a number of ancillary qubits growing only linearly with the weight of the logical operator measured, and guarantee that the distance of the code is preserved. We further benchmark our protocol numerically in a photonic architecture based on GKP qubits, showing that the logical error rates of various codes are on par with other methods requiring more ancilla qubits.
\end{abstract}

\maketitle

\section{Introduction}
\label{sec:intro}

Due to the fragile nature of quantum systems, one of the most promising approaches to build a useful fault-tolerant large scale quantum computer is to encode the information using stabilizer codes~\cite{Shor1995,gottesman1997stabilizer}. Two-dimensional topological codes~\cite{KITAEV2003-2,Bombin2006} have long been studied due to their relatively high resilience to noise as evidenced by their high thresholds~\cite{Dennis2002,Bombin2012} and their planar layout, however they suffer from a low encoding rate~\cite{BPT-2010}. This contributes to a large qubit overhead requirement in practical applications using these codes~\cite{Gidney2021, PhysRevResearch.4.023019}.

More general quantum low density parity check (qLDPC) codes have recently attracted significant interest~\cite{Kovalev2012,Tillich2014,Gottesman2014,Bravyi2014,Leverrier2015,Breuckmann2016,Breuckmann2017,Fawzi2018,dasilva2018,Hastings2020,Evra2020,Panteleev2021,Breuckmann2021Balanced, Panteleev2022,Vuillot2022,dinur2022good,Strikis2023, Roffe2023biastailoredquantum,yang2023spatiallycoupled,feng2023new,miao2024joint,sabo2024weight}. While in general they require some degree of long range connectivity, they do not suffer from the same bounds relating their size, distance, and rate as their local counterparts, allowing for the design of qLDPC codes that require significantly less overhead.

Photonics platforms allow for flexibility of connectivity since components of the quantum processor can be connected via optical fibers~\cite{Bourassa2021blueprintscalable,tzitrin2021staticlinear, bartolucci2021fusionbased,pankovich2023high}, thus they are an ideal match for qLDPC codes. The optical nature of the qubits used also makes measurement-based quantum computing~\cite{Raussendorf2005,RAUSSENDORF2006,Raussendorf2007-1,Raussendorf2007-2,Bolt2016} a natural choice. In some photonic architectures~\cite{tzitrin2021staticlinear, walshe2024}, the effective noise experienced by a qubit scales with the number and weight of stabilizers in which the qubit participates. Further, the dominant source of noise in this architecture is photon loss, and this physical source of noise works nicely with assumptions of fault-tolerance. As such, this work is further motivated by specific needs of this kind of photonics-based architecture to develop fault-tolerant, qLDPC, low space overhead measurement protocols.

A key ingredient to run a useful algorithm in such a platform is the ability to perform logical Pauli measurements. Clifford gates can be implemented using joint logical Pauli measurements alongside the ability to prepare fault-tolerantly encoded Pauli eigenstates~\cite{Litinski2019}. Additionally, having access to some encoded non-Clifford states is sufficient to achieve a universal gate set~\cite{Bravyi2005}. While lattice surgery techniques have been widely studied for topological codes~\cite{Horsman_2012,landahl2014quantum,Litinski2018latticesurgery,PoulsenNautrup2017}, relatively little work has focused on more general qLDPC codes. Homomorphic logical measurements~\cite{Huang2023} and generalized lattice surgery~\cite{Cohen2021} both maintain qLDPC properties of the codes they act on, but require a large additional overhead scaling at least quadratically with the distance, and CSS code surgery~\cite{cowtan2023css} cannot guarantee fault-tolerance, as well as being limited to measuring CSS logical Pauli operators. The overhead of~\cite{Cohen2021} was improved in~\cite{cross2024linearsize}, but it is only improved to linear overhead in cases where the logical operators have a certain structure related to edge expansion.

In this work, we introduce the mathematical framework of \emph{homological measurement} based on the chain complex description of CSS codes. Using this framework, we develop a scheme to measure any multiqubit logical Pauli operator with a number of additional physical qubits scaling linearly with the weight of the logical operator being measured. Given a code with a logical operator to measure, our protocol gives a fault-tolerant method of extending the code while including the logical operator in the stabilizer group.

Homological measurement hinges on the fact that a CSS code $\mathcal C$ can naturally be described by a chain complex of length $2$, which comprises $3$ vector spaces. One of these describes the physical qubits of the system, while the other $2$ spaces describe the $X$ and $Z$ syndrome spaces. Linear maps connecting these spaces, called boundary maps, are given by the $X$ and $Z$ type check matrices.

Since CSS codes can be described by chain complexes, homological algebra~\cite{weibel1994introduction} can be used to describe features of CSS codes and operations on CSS codes. For example, logical operators are given by the homology and cohomology groups of the chain complex. One such operation that is used to great effect in this work is the \emph{mapping cone} applied to maps between chain complexes called \emph{chain maps}.

The main idea behind homological measurement is to create an ancillary quantum code also described via a chain complex $\mathcal A$ and a chain map $f$ from $\mathcal A$ to $\mathcal C$. A new CSS code is then available via the chain complex resulting from the mapping cone of $f$. The logical operators in this new code can be studied via the mathematics of the well-established literature on homological algebra~\cite{weibel1994introduction}. Various known fault-tolerant logical measurement schemes can be described straightforwardly as homological measurements, including lattice surgery~\cite{Horsman_2012} and generalized lattice surgery~\cite{Cohen2021,cross2024linearsize}. The mapping cone construction has also been been used to develop methods to reduce the weight of stabilizer generators of CSS codes~\cite{hastings2016, hastings2023quantum, sabo2024weight, wills2023tradeoff}. By carefully choosing the chain complex $\mathcal A$ and chain map $f$, we can engineer a system to perform fault-tolerant logical Pauli operator measurements for arbitrary CSS qLDPC codes with fewer resources than previous methods. Additionally, by exploiting edge expansion of a graph structure implied by the chain complex $\mathcal A$, we are able to maintain the distance with much smaller ancillary systems compared to previous methods. We call this \emph{edge expanded homological measurement}.

A self-contained introduction to the relevant concepts of graph theory and homological algebra is presented in~\cref{sec:Backround}, followed by the description of the homological measurement framework as well as our protocol using this framework in~\cref{sec:hom-meas}. Numerical simulations benchmarking our protocol for the measurement-based implementation of various quantum error correcting codes using optical Gottesman-Kitaev-Preskill qubits~\cite{GKP} are presented in~\cref{sec:numerics}, followed by concluding remarks in~\cref{sec:discussion}.

\section{Background and Notation} \label{sec:Backround}

\subsection{Graph theory}\label{sec:graph-theory}

The construction in this work relies on the notion of a hypergraph which generalizes graphs so that ``edges'' are arbitrary subsets of a vertex set.

\begin{definition}
  A \emph{hypergraph} $\mathcal G(V, E)$ is given by a vertex set $V$ and an edge set $E \subseteq \mathcal P(V)$ where $\mathcal P(V)$ is the power set of $V$.
\end{definition}

Graphs and hypergraphs will mostly be given by incidence matrices.

\begin{definition}
  An \emph{edge-vertex incidence matrix} $M$ for a (hyper)graph $\mathcal G(V, E)$ is a $|E|\times|V|$ matrix where rows represent edges and columns represent vertices given by
  \begin{equation}
      M_{i,j} = \begin{cases}
          1 & v_j \in e_i \\
          0 & \text{otherwise.}
      \end{cases}
  \end{equation}
\end{definition}

In this manuscript, when the term \emph{incidence matrix} is used without qualification, we mean the \emph{edge-vertex incidence matrix}. We allow edges to be repeated, so the edge set $E$ could be a multiset, and correspondingly, the incidence matrix could have multiple identical rows.

 We use the following generalization to hypergraphs for the Cheeger constant:
\begin{definition}
  Consider the incidence matrix \(M\) of a (hyper)graph \(\mathcal G(V, E)\). The \emph{boundary} of \(S \subset V\) is
  \begin{equation}
    \partial S = \{e \in E\ \big\vert\ |e \cap S|\text{ is odd}\}.
  \end{equation}
  The \emph{Cheeger constant} of \(\mathcal G\) is given by
  \begin{equation}
  \begin{aligned}
    h(\mathcal G) &= \min \left\{\frac{|\partial S|}{|S|}\ \bigg\vert\ S \subset V,\ |S| \leq |V|/2\right\} \\
    &= \min \left\{\frac{\wt(Mv)}{\wt(v)}\ \bigg\vert\ \wt(v) \leq |V|/2 \right\}.
  \end{aligned}
  \end{equation}
\end{definition}
This definition matches the usual definition for graphs. Further, if the same matrix $M$ is instead interpreted as a biadjacency matrix for a Tanner graph where rows represent checks and columns represents variables, this matches the Cheeger constant definition in \cite{cross2024linearsize}.

\begin{definition} \label{def:cycle-basis}
    If $M$ is an incidence matrix, then a \emph{cycle basis} is given by the rows of any full-rank matrix $N$ such that $\ker N \cong \im M$. The matrix $N$ can be seen as a cycle-edge incidence matrix where the edge given by the $i$-th row of $M$ is given by the $i$-th column of $N$, and a row of $N$ defines the edges of a cycle in the cycle basis.
\end{definition}

\subsection{Homological algebra}\label{app:mappingcone}

The following is designed to be mostly self-contained, but interested readers should consult standard references to learn more~\cite{rotman2009introduction, weibel1994introduction}.

A \emph{chain complex} \(\mathcal C\) is an ordered sequence of finite dimensional vector spaces $\{C_i\}$ over $\mathbb{F}_2$ with linear transformations between consecutive spaces, \(\partial_{i+1}:\ C_{i+1} \to C_i\), such that $\partial_i \circ \partial_{i + 1} = 0$:
\begin{equation*}
    \mathcal C: \quad \cdots \to C_{i + 1} \xrightarrow{\partial_{i + 1}} C_i \xrightarrow{\partial_i} C_{i - 1} \to \cdots .
\end{equation*}
We are only interested in chain complexes with a finite number of vector spaces, i.e., for indices $m, m+1, \dots, n$:
\begin{equation*}
    \mathcal C: \quad C_{n} \xrightarrow{\partial_{n}} \cdots \xrightarrow{\partial_{m + 2}} C_{m + 1} \xrightarrow{\partial_{m + 1}} C_{m} .
\end{equation*}
A chain complex with $\ell$ spaces is called an $\ell$-term chain complex. Since $\partial_i \circ \partial_{i + 1} = 0$, we have $\im\left(\partial_{i + 1}\right) \subseteq \ker\left(\partial_i\right)$. The $i$th \emph{homology group} is defined as $H_i(\mathcal C) = \ker \left(\partial_i\right) / \im\left(\partial_{i + 1}\right)$. For a chain with a finite number of terms, to calculate the homology groups assume that any unspecified spaces are the trivial space $0$ with trivial maps. Note that throughout this work, the symbol $H$ is also used for parity-check and stabilizer matrices but the usage is clear from context. The dual of a chain complex \(\mathcal C\) is a the \emph{cochain complex} \(\mathcal C^{*}\):
\begin{equation*}
    \mathcal C^{*}: \quad \cdots \leftarrow C^*_{i + 1} \xleftarrow{\delta_{i}} C^*_i \xleftarrow{\delta_{i-1}} C^*_{i - 1} \leftarrow \cdots 
\end{equation*}
where $C_i^*$ is the space of all linear functionals on $C_i$. Note that $C_i^* \cong C_i$ since we only consider finite dimensional vector spaces. The corresponding dual of the homology groups are the \emph{cohomology groups} ${H^i(\mathcal C) = \ker \delta_{i} / \im \delta_{i-1}}$. We assume that vector spaces and their duals are equipped with the standard basis so that, when viewed as matrices, $\delta_{i-1}^{\mathrm T} = \partial_i$.

A chain is said to be \emph{exact} at $C_i$ if $\im\left(\partial_{i + 1}\right) = \ker\left(\partial_i\right)$ and is said to be an \emph{exact sequence} if it is exact at each $C_i$. An exact sequence of the form $0 \to X \to Y \to Z \to 0$ is called a \emph{short exact sequences} (SES). Any exact sequence with more than three nontrivial spaces is called a \emph{long exact sequence} (LES).

Suppose that ${0 \to V_1 \to \cdots \to V_r \to 0}$ is an exact sequence of finite dimensional vector spaces. Then
\begin{equation}\label{eq:rank-nullity-gen}
    \sum_{i=1}^r (-1)^i \dim V_i = 0.
\end{equation}
This is a generalization of the rank-nullity theorem.

Consider two chain complexes $\mathcal{A}$ and $\mathcal{C}$ and maps $f_i: A_i \to C_i$ linking them:
\begin{equation}\label{eq:chainmap}
    \begin{tikzcd}
        {\mathcal{A}:} & {\cdots} & {A_i} & {A_{i - 1}} & \cdots \\
        {\mathcal{C}:} & {\cdots} & {C_i} & {C_{i - 1}} & \cdots
        \arrow[from=1-2, to=1-3]
        \arrow[from=2-2, to=2-3]
        \arrow["{\partial^{\mathcal A}_i}", from=1-3, to=1-4]
        \arrow["{\partial^{\mathcal C}_i}", from=2-3, to=2-4]
        \arrow["{f_i}", from=1-3, to=2-3]
        \arrow["{f_{i - 1}}", from=1-4, to=2-4]
        \arrow[from=1-4, to=1-5]
        \arrow[from=2-4, to=2-5]
        \arrow["{f}"', from=1-1, to=2-1]
    \end{tikzcd}
\end{equation}

When $\partial^{\mathcal C}_{i}(f_{i}(a)) = f_{i-1}\left(\partial^{\mathcal A}_{i}(a)\right)$ for all $a \in A_{i}$, the maps $f_i$ are collectively called a \emph{chain map} \(f: \mathcal A \to \mathcal C\). This condition is often written more compactly as ${\partial^{\mathcal C}f = f\partial^{\mathcal A}}$. The \emph{mapping cone} of \(f\) is defined to be the chain complex with spaces $\mathrm{cone}(f)_i = C_i \oplus A_{i-1}$ and maps $\partial_i^{\mathrm{cone}(f)}: \mathrm{cone}(f)_i \to \mathrm{cone}(f)_{i - 1}$
\begin{equation}\label{eq:cone}
    \begin{gathered}
    \begin{tikzcd}
        {A_{i-1}} & {A_{i-2}}  \\
        {C_i} & {C_{i-1}}  \\
        {\mathrm{cone}(f)_i} & {\mathrm{cone}(f)_{i-1}}
        \arrow["{f_{i-1}}", from=1-1, to=2-2]
        \arrow["{\partial^{\mathcal C}_i}"', from=2-1, to=2-2]
        \arrow["{\partial^{\mathcal A}_{i-1}}", from=1-1, to=1-2]
        \arrow["{\partial_i^{\mathrm{cone}(f)}}", from=3-1, to=3-2]
        \arrow["\oplus", phantom, from=1-1, to=2-1]
        \arrow["\oplus", phantom, from=1-2, to=2-2]
    \end{tikzcd}\\
    \partial_i^{\mathrm{cone}(f)} = \, \begin{pNiceArray}{cc}[first-row, first-col]
        & C_i & A_{i-1} \\
        C_{i-1} & \partial^{\mathcal C}_i & -f_{i-1} \\
        A_{i-2} & 0 & -\partial^{\mathcal A}_{i-1}
    \end{pNiceArray}.
    \end{gathered}
\end{equation}
In this paper we restrict to $\mathbb F_2$, so the minus signs in the map above will be omitted.

The mapping cone gives the short exact sequence
\begin{equation}\label{SES}
    0 \to C_k \xrightarrow{\iota} \mathrm{cone}(f)_k \xrightarrow{\pi} A_{k-1} \to 0
\end{equation}
for all indices $k$ where \(\iota\) is inclusion and \(\pi\) is projection. Applying the Snake Lemma~\cite{rotman2009introduction, weibel1994introduction} with the chain map $f$ as the connecting map, we obtain the following long exact sequence on homology where $f_*$, $\iota_*$, and $\pi_*$ are maps derived from $f$, $\iota$, and $\pi$:
\begin{equation}\label{LES}
    \begin{tikzcd}
        {\cdots} & {H_{k + 1}(\mathrm{cone}(f))} & {H_k(\mathcal{A})} \\
        {H_k(\mathcal{C})} & {H_k(\mathrm{cone}(f))} & {\cdots}
        \arrow["{\iota_*}", from=1-1, to=1-2]
        \arrow["{\pi_*}", from=1-2, to=1-3]
        \arrow["{f_*}"', from=1-3, to=2-1]
        \arrow["{\iota_*}"', from=2-1, to=2-2]
        \arrow["{\pi_*}"', from=2-2, to=2-3]
    \end{tikzcd}
\end{equation}
This exact sequence allows us to study the logical operators that remain after logical measurement is conducted via the mapping cone.

\subsection{Codes as chain complexes}\label{sec:codes-as-chains}

There is a natural correspondence between codes and chain complexes. Let $H \in \mathbb{F}^{n - k \times n}_2$ be a parity-check matrix of an $[n, k, d]$ (classical) linear code, where $n$ is the number of variables, $k$ is the dimension, and $d$ is the distance. Then we can express this code as
\begin{equation}\label{classicalchain}
    \mathbb{F}^n_2 \xrightarrow{H} \mathbb{F}^{n - k}_2,
\end{equation}
which vacuously satisfies the definition of a chain complex. Conversely, any chain complex with only one map gives a classical linear code.

Now consider an $[\![n, k, d]\!]$ CSS code. Define $d_X$ to be the minimum weight of nontrivial $X$-logical operators and $d_Z$ to be the minimum weight of nontrivial $Z$-logical operators. Then $n$ is the number of physical qubits, $k$ is the number of logical qubits, and $d = \min\{d_X, d_Z\}$ is the distance. A CSS code $\mathrm{CSS}(H_X, H_Z)$ is generated by $n_X$ $X$-stabilizer generators, not necessarily independent, given by rows of the matrix $H_X$ and $n_Z$ $Z$-stabilizer generators given by rows of $H_Z$. Since $H_Z H_X^{\mathrm T} = 0$, we can treat this as the chain complex
\begin{equation}\label{CSSchain}
    \mathbb{F}^{n_X}_2 \xrightarrow{H_X^{\mathrm T}} \mathbb{F}^n_2 \xrightarrow{H_Z} \mathbb{F}^{n_Z}_2
\end{equation}
with the qubits in the center. Conversely, we can derive a CSS code from any two consecutive boundary maps $\partial_{i+1}$ and $\partial_i$ by setting $H_X^{\mathrm T} = \partial_{i + 1}$ and $H_Z = \partial_i$. The $X$-logical operators commute with the $Z$ stabilizers (i.e., they are in $\ker H_Z$) and are not $X$ stabilizers (i.e., they are not in $\im H_{X}^{\mathrm T}$), which make them nontrivial elements of the first homology group $H_1$. The dual gives the cochain
\begin{equation}\label{CSScochain}
    \mathbb{F}^{n_X}_2 \xleftarrow{H_X} \mathbb{F}^n_2 \xleftarrow{H^{\mathrm T}_Z} \mathbb{F}^{n_Z}_2,
\end{equation}
which implies that the $Z$ logicals are given by the first cohomology group $H^1$. Since these descriptions of the CSS code are duals of each other, one can also view \cref{CSScochain} as the chain with \cref{CSSchain} as its cochain, which is common in the literature. We choose to consider \cref{CSSchain} as the chain complex because this is natural for measurement of $X$-logical operators.

Since CSS codes and 3-term chain complexes are equivalent, we can create new codes using mapping cones. Consider a chain map ${f: \mathcal A \to \mathcal C}$, where $\mathcal C$ is the chain complex ${C_2 \xrightarrow{H^{\mathrm T}_X} C_1 \xrightarrow{H_Z} C_0}$ given by a CSS code and $\mathcal A$ is a chain complex ${A_1 \xrightarrow{\partial_1} A_0 \xrightarrow{\partial_0} A_{-1}}$. The \emph{cone code} is given by
\begin{equation}\label{conecode}
    \begin{tikzcd}
    & {A_1} & {A_0} & {A_{-1}} \\
	& {C_2} & {C_1} & {C_0}
	\arrow["{\partial_1}", from=1-2, to=1-3]
	\arrow["{\partial_0}", from=1-3, to=1-4]
	\arrow["{H_X^{\mathrm T}}"', from=2-2, to=2-3]
	\arrow["{H_Z}"', from=2-3, to=2-4]
	\arrow["{f_1}", from=1-2, to=2-3]
	\arrow["{f_0}", from=1-3, to=2-4]
    \arrow["\oplus", phantom, from=1-2, to=2-2]
    \arrow["\oplus", phantom, from=1-3, to=2-3]
    \arrow["\oplus", phantom, from=1-4, to=2-4]
    \end{tikzcd}
\end{equation}
which has stabilizer matrices
\begin{equation}\label{conecode-mats}
    \widetilde{H}_X = \begin{pmatrix}
        H_X & \\
        f_1^{\mathrm T} & \partial_1^{\mathrm T}
    \end{pmatrix}, \quad
    \widetilde{H}_Z = \begin{pmatrix}
        H_Z & f_0 \\
            & \partial_0
    \end{pmatrix}.
\end{equation}

\section{Homological measurement} \label{sec:hom-meas}

The goal of this work is to measure logical operators fault-tolerantly while minimizing the number of qubits required. Naively, one would simply measure the desired logical operator directly, removing it from the logical group and adding it to the stabilizer group. However, for codes with large distances, this would result in measuring directly a high-weight operator, a process which is prone to errors. Consequently, protocols have been developed \cite{Horsman_2012,Cohen2021,cross2024linearsize} to infer the value of these higher-weight logical operators by measuring a number of lower-weight operators instead. To accomplish this, one must modify the original code(s) in a way that leaves the logical information left unaffected by the desired measurement. This can be done for any CSS code using the generalized lattice surgery protocol from \cite{Cohen2021}, however, the resource overhead is prohibitively large. The situation is improved in \cite{cross2024linearsize} by exploiting edge expansion that is present in the Tanner graph, but except in some specific cases where the edge expansion is sufficient, this still uses more resources than necessary. The protocol developed in this work uses fewer resources in general by improving the edge expansion directly rather than accepting it as a fixed quantity in a process we call \emph{edge expanded homological measurement}.

Given a CSS code $\mathcal C$ and a logical operator of $\mathcal C$, we design an ancillary code $\mathcal A$ and a chain map \(f: \mathcal A \to \mathcal C\) such that this logical operator is in the stabilizer group of the cone code of $f$. We require that the result of the logical measurement can be inferred correctly despite the presence of errors, and as such, the distance of $\widetilde{\mathcal C}$ should be at least that of $\mathcal C$ and stabilizer weights should be reasonably low. Finally, no other logical operators of $\mathcal C$ should be measured so that the remaining logical information is left intact.

A discussion of the properties of cone codes in general is given in \cref{sec:math-framework-properties}, along with examples of past measurement protocols that fit the cone code framework. The edge expanded homological measurement construction for measuring any $X$- or $Z$-logical operator is given in \cref{sec:single-log} and a protocol for using this construction in a measurement scheme is given in \cref{sec:physical-considerations}. A construction for measuring operators built from a mixture of logical $X$-, $Y$-, and $Z$-Paulis is given in \cref{sec:non-css-meas}.

\subsection{Mathematical properties of cone codes}\label{sec:math-framework-properties}
 
Consider a CSS code given by stabilizer matrices \(H_X\) and \(H_Z\) written as the chain complex
\begin{equation}\label{eqn:code-chain-complex}
    \mathcal C:\quad C_2 \xrightarrow{H^{\mathrm T}_X} C_1 \xrightarrow{H_Z} C_0,
\end{equation}
an ancillary CSS code with $X$-stabilizer matrix $\partial_1^{\mathrm T}$ and $Z$-stabilizer matrix $\partial_0$ written as the chain complex
\begin{equation}\label{eqn:ancilla-chain-complex}
    \mathcal A:\quad A_1 \xrightarrow{\partial_1} A_0 \xrightarrow{\partial_0} A_{-1},
\end{equation}
and a chain map \(f:\,\mathcal A \to \mathcal C\). We will use the notation
\begin{equation}
  \mathrm{cone}(f) = \widetilde{\mathcal C}:\quad \widetilde C_2 \xrightarrow{\widetilde H^{\mathrm T}_X} \widetilde C_1 \xrightarrow{\widetilde H_Z} \widetilde C_0
\end{equation}
for the cone code of \(f\).

Before constructing a specific \(\mathcal A\) and \(f\), we first make a few general observations. We choose to interpret the resulting cone code as a subsystem code, and so $\dim H_1(\widetilde{\mathcal C}) = \widetilde k + \widetilde r$ where $\widetilde k$ is the number of logical qubits and $\widetilde r$ is the number of gauge qubits. Applying the generalization of the rank-nullity theorem given in \cref{eq:rank-nullity-gen} to the long exact sequence of \cref{LES} for $\widetilde{\mathcal C}$, and treating any of the nontrivial operators of the cone code that arise from $\mathcal A$ as gauges, we have the following:
\begin{align}
  &\widetilde k = k + \left( \dim \ker \widetilde H_X^{\mathrm T} - \dim \ker H_X^{\mathrm T} \right) - \dim \ker \partial_1 \label{change-in-k}\\
  &\begin{aligned}[b]
      \widetilde r = &\dim \left( \ker \partial_0 / \im \partial_1 \right) + \left( \dim \ker \widetilde H_Z^{\mathrm T} - \dim \ker \partial_0^{\mathrm T} \right) \\ & - \dim \ker H_Z^{\mathrm T}.
  \end{aligned} \label{eq:new-gauges}
\end{align}
The term $\left( \dim \ker \widetilde H_X^{\mathrm T} - \dim \ker H_X^{\mathrm T} \right)$ measures the amount of redundancy of the $X$-stabilizer generators in $\widetilde{\mathcal C}$ that were not directly copied from $\mathcal C$. \Cref{lem:ker-partial1} below shows that this remaining redundancy is isomorphic to a subspace of $\ker \partial_1$ and that the remaining elements of $\ker \partial_1$ contribute to reduction in $\widetilde k$. \Cref{eq:new-gauges} mirrors \cref{change-in-k} with the roles of $\mathcal A$ and $\mathcal C$ swapped and the roles of $X$ and $Z$ swapped, noting in particular that the number of logical qubits in $\mathcal A$ is $\dim \left( \ker \partial_0 / \im \partial_1 \right)$. When applying a cone code construction to measure logical operators, as it is used in \cref{sec:single-log}, the reduction of $\widetilde k$ corresponds to the measurement of a logical operator.

\begin{lemma}[Characterization of $\ker \partial_1$]\label{lem:ker-partial1}
    Given a CSS code $\mathcal C = \mathrm{CSS}(H_X, H_Z)$ interpreted as a chain complex $\mathcal C:C_2 \xrightarrow{H_X^{\mathrm T}} C_1 \xrightarrow{H_Z} C_0$, a chain complex ${\mathcal A:A_1 \xrightarrow{\partial_1} A_0 \xrightarrow{\partial_0} A_{-1}}$, and a chain map ${f: \mathcal A \to \mathcal C}$,
    \begin{equation}\label{eq:ker-partial1}
        \ker \partial_1 \cong \mathcal S \oplus \mathcal L
    \end{equation}
    where $\mathcal S$ is the space of $X$-stabilizers of $\mathcal C$ that can be measured via the new stabilizer generators of $\widetilde H_X$ and $\mathcal L$ is the space of nontrivial $X$-logical operators of $\mathcal C$ that can be measured via the new stabilizer generators of $\widetilde H_X$.
\end{lemma}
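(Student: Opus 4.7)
The plan is to construct a natural linear map $\phi:\ker\partial_1\to H_1(\mathcal C)$ whose kernel corresponds to $\mathcal S$ and whose image corresponds to $\mathcal L$, then conclude using the fact that every short exact sequence of $\mathbb F_2$-vector spaces splits.

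First I would unpack the structure of the new $X$-stabilizer generators. Reading off the lower block of $\widetilde H_X$ in \cref{conecode-mats}, the product of new generators labelled by $a\in A_1$ is the operator $\bigl(f_1(a),\ \partial_1(a)\bigr)$ on $\widetilde C_1=C_1\oplus A_0$. For such a product to ``measure'' an operator of the original code $\mathcal C$, it must have no support on the ancillary block $A_0$, which forces $\partial_1(a)=0$, i.e. $a\in\ker\partial_1$. Thus $\ker\partial_1$ parametrises exactly the products of new $\widetilde H_X$-generators that act on the original code qubits, and each such product equals $f_1(a)$ on $C_1$.

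Next I would use the chain-map identity $H_Z f_1 = f_0\partial_1$ to show that for $a\in\ker\partial_1$ we have $H_Z f_1(a)=0$. Hence $f_1(a)\in\ker H_Z$, so it represents either an element of $\im H_X^{\mathrm T}$ (a genuine $X$-stabilizer of $\mathcal C$) or a nontrivial class in $H_1(\mathcal C)=\ker H_Z/\im H_X^{\mathrm T}$ (a nontrivial $X$-logical). This is exactly what is needed to define the map $\phi:\ker\partial_1\to H_1(\mathcal C)$ by $\phi(a)=[f_1(a)]$. By definition of the two pieces in the statement, the kernel of $\phi$ is identified with $\mathcal S$ (the measurable $X$-stabilizers of $\mathcal C$) and the image of $\phi$ is identified with $\mathcal L$ (the measurable nontrivial $X$-logicals of $\mathcal C$). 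The resulting short exact sequence
\begin{equation*}
  0\to\mathcal S\to\ker\partial_1\xrightarrow{\phi}\mathcal L\to 0
\end{equation*}
then splits because it is an exact sequence of $\mathbb F_2$-vector spaces, yielding the claimed direct-sum decomposition.

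The main subtlety — more a matter of bookkeeping than a genuine obstacle — is that $\mathcal S$ and $\mathcal L$ as stated live naturally on the $\mathcal C$-side (respectively as a subspace of $\im H_X^{\mathrm T}$ and as a subspace of $H_1(\mathcal C)$), while $\ker\phi$ and $\im\phi$ live on the $\mathcal A$-side. One has to be careful when $f_1|_{\ker\partial_1}$ is not injective on $\phi^{-1}(0)$: elements of $\ker\partial_1\cap\ker f_1$ trivially lie in $\ker\phi$ but do not correspond to distinct stabilizers of $\mathcal C$. The cleanest fix is to redefine $\mathcal S$ as the full kernel of $\phi$ inside $\ker\partial_1$, under which the isomorphism becomes a tautology; the original wording of the lemma is then recovered up to the canonical identification $\mathcal S/(\ker\partial_1\cap\ker f_1)\cong f_1(\ker\partial_1)\cap\im H_X^{\mathrm T}$. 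Once this identification is fixed, the splitting argument above finishes the proof.
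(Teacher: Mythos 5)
Your proposal is correct and follows essentially the same route as the paper: both send $a\in\ker\partial_1$ to the product of new $\widetilde H_X$ generators it indexes, note that this product has no support on the ancilla block, and use commutation with $H_Z$ (the paper argues via the structure of $\widetilde H_Z$ restricted to the original qubits, you via the chain-map identity $H_Zf_1=f_0\partial_1$ --- the same fact) to conclude that the restriction to $C_1$ is either an $X$-stabilizer or a nontrivial $X$-logical of $\mathcal C$. If anything you go further than the paper, whose proof stops at ``stabilizer or logical'' and never explicitly constructs the direct-sum isomorphism; your map $\phi$, the splitting of the short exact sequence over $\mathbb F_2$, and the remark about $\ker\partial_1\cap\ker f_1$ supply precisely the bookkeeping the paper leaves implicit.
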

\begin{proof}
    Recall that the mapping cone of $f$ gives a CSS code $\widetilde{\mathcal C}$ with stabilizer matrices
    \begin{equation}
        \widetilde{H}_X = \begin{pmatrix}
            H_X & \\
            f_1^{\mathrm T} & \partial_1^{\mathrm T}
        \end{pmatrix}, \quad
        \widetilde{H}_Z = \begin{pmatrix}
            H_Z & f_0 \\
                & \partial_0
        \end{pmatrix}.
    \end{equation}
    Let $\vec v \in \ker \partial_1$ and define
    \begin{equation}
        S_{\vec v} = \left(\begin{pmatrix} f_1 \\ \partial_1 \end{pmatrix} \vec v \right)^{\mathrm T}.
    \end{equation}
    By construction, $S_{\vec v}$ is a sum of some of the new $X$-type stabilizer generators of $\widetilde{\mathcal C}$ and therefore is an $X$ stabilizer of $\widetilde{\mathcal C}$. Since $\vec v \in \ker \partial_1$, $S_{\vec v}$ has no support on the ancillary qubits.
    
    Since $\widetilde H_Z$ restricted to the original qubits of $\mathcal C$ is simply $H_Z$, we know that $S_{\vec v}|_{\mathcal C}$ represents an $X$ operator that commutes with $H_Z$. Thus, $S_{\vec v}$ is either an $X$ stabilizer or a nontrivial $X$ logical of $\mathcal C$.
\end{proof}

One on the uses of \cref{lem:ker-partial1} is to show that our protocol does not inadvertently measure extra logical operators by enforcing $\dim \ker \partial_1 = 1$.

Even with this generality the \(Z\)-distance of \(\widetilde{\mathcal C}\) is maintained:

\begin{theorem}[$Z$-distance]\label{thm:z-dist}
  If the non-trivial operators of the cone code in \cref{conecode-mats} resulting from the logicals of the ancilla code are treated as gauges as in \cref{change-in-k,eq:new-gauges}, then the dressed \(Z\)-distance \(\widetilde d_Z\) is maintained, i.e., \(\widetilde d_Z \geq d_Z\).
\end{theorem}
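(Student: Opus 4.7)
The plan is to take an arbitrary representative $\vec z = (\vec z_C, \vec z_A)^{\mathrm T}$ with $\vec z_C \in C_1$ and $\vec z_A \in A_0$ of a non-trivial dressed $Z$-logical class of $\widetilde{\mathcal C}$ and show that its $\mathcal C$-component $\vec z_C$ is itself a non-trivial $Z$-logical of $\mathcal C$. Since $\wt(\vec z) \geq \wt(\vec z_C)$, this would immediately give $\widetilde d_Z \geq d_Z$. From the block form of $\widetilde H_X$, the condition $\widetilde H_X \vec z = 0$ decomposes as $H_X \vec z_C = 0$ (so $\vec z_C$ commutes with every $X$-stabilizer of $\mathcal C$) and $f_1^{\mathrm T} \vec z_C = \partial_1^{\mathrm T} \vec z_A$. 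Hence $\vec z_C$ is already a candidate $Z$-logical of $\mathcal C$, and the only remaining thing to rule out is that $\vec z_C$ lies in $\im H_Z^{\mathrm T}$.

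Assume for contradiction that $\vec z_C = H_Z^{\mathrm T} \vec u$ for some $\vec u$. Then subtracting the cone code $Z$-stabilizer $\widetilde H_Z^{\mathrm T}(\vec u, \vec 0)^{\mathrm T} = (H_Z^{\mathrm T}\vec u,\, f_0^{\mathrm T}\vec u)^{\mathrm T}$ from $\vec z$ gives an equivalent representative $\vec z' = (\vec 0,\, \vec z_A + f_0^{\mathrm T}\vec u)^{\mathrm T}$. To check that $\vec z' \in \ker \widetilde H_X$ I invoke the transposed chain-map identity $f_1^{\mathrm T} H_Z^{\mathrm T} = \partial_1^{\mathrm T} f_0^{\mathrm T}$: combined with $\partial_1^{\mathrm T} \vec z_A = f_1^{\mathrm T}\vec z_C = f_1^{\mathrm T} H_Z^{\mathrm T}\vec u$, this forces $\vec z_A + f_0^{\mathrm T}\vec u \in \ker \partial_1^{\mathrm T}$. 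Now this ancilla-only vector is either in $\im \partial_0^{\mathrm T}$, in which case $\vec z'$ is an honest $Z$-stabilizer of $\widetilde{\mathcal C}$, or else it represents a non-trivial class in $\ker \partial_1^{\mathrm T}/\im \partial_0^{\mathrm T}$, i.e.\ a $Z$-logical of $\mathcal A$, which by the hypothesis of the theorem is treated as a gauge of $\widetilde{\mathcal C}$. In either case $\vec z'$ is trivial in the dressed sense, contradicting the non-triviality of $\vec z$.

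Therefore $\vec z_C \notin \im H_Z^{\mathrm T}$, and since $H_X \vec z_C = 0$, $\vec z_C$ is a non-trivial $Z$-logical of $\mathcal C$, giving $\wt(\vec z_C) \geq d_Z$ and hence $\widetilde d_Z \geq d_Z$. The most delicate step is the reduction argument in the second paragraph: it is the chain-map identity that arranges for the modified ancilla component to land precisely in $\ker \partial_1^{\mathrm T}$, and it is the subsystem-code convention (ancilla logicals treated as gauges) that lets us discard whatever is left. Once these two pieces are aligned, the rest of the proof is an essentially mechanical check.
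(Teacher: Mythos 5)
Your proposal is correct and follows essentially the same route as the paper's proof: both reduce to the case where the $\mathcal C$-component is a $Z$-stabilizer of $\mathcal C$, apply the transposed chain-map identity $f_1^{\mathrm T}H_Z^{\mathrm T} = \partial_1^{\mathrm T}f_0^{\mathrm T}$ to push the residual ancilla part into $\ker\partial_1^{\mathrm T}$, and then absorb it into the stabilizer/gauge group. The only difference is presentational — you argue by contradiction on a logical representative, while the paper gauge-fixes first (replacing $\partial_0$ by $G$ with $\ker G \cong \im\partial_1$) and shows the operator is a stabilizer of the fixed code directly.
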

\begin{proof}
  The dressed \(Z\)-distance is the same as the \(Z\)-distance of the stabilizer code where a full basis of the \(Z\)-gauges are included as stabilizers. To that end, since all gauges arise from logicals of the ancilla code, replace \(\partial_0\) with a matrix \(G\) such that \(\ker G \cong \im \partial_1\) leaving us to study the following stabilizer code:
  \begin{equation}\label{eq:conecodeZ}
    \widetilde H_X = \begin{pmatrix}
                       H_X & \\
                       f_1^{\mathrm T} & \partial_1^{\mathrm T}
                     \end{pmatrix}, \quad
    \widetilde H_Z' = \begin{pmatrix}
                       H_Z & f_0 \\
                        & G
                     \end{pmatrix}.
  \end{equation}

  Let \(v = \begin{pmatrix}v_C \\ v_A\end{pmatrix} \in \ker \widetilde H_X\) where \(v_C\) is defined on the block of qubits from \(\mathcal C\) and \(v_A\) is on the block of qubits from \(\mathcal A\). This implies
  \begin{equation}\label{eq:zdist1}
    H_Xv_C = 0
  \end{equation}
  and
  \begin{equation}\label{eq:zdist2}
    f_1^{\mathrm T}v_C + \partial_1^{\mathrm T}v_A = 0.
  \end{equation}

  From \eqref{eq:zdist1}, if \(v_C^{\mathrm T}\) is not a \(Z\)-stabilizer of \(\mathcal C\), then it must be a logical of \(\mathcal C\) and so \(\wt(v) \geq \wt(v_C) \geq d_Z\). Suppose that \(v_C^{\mathrm T}\) is a \(Z\)-stabilizer of the original code, therefore there is a vector \(w_1\) such that \(v_C = H_Z^{\mathrm T}w_1\). Since \(f\) is a chain map, we know that \(H_Zf_1 = f_0\partial_1\), or equivalently, \(f_1^{\mathrm T}H_Z^{\mathrm T} = \partial_1^{\mathrm T}f_0^{\mathrm T}\). Using this,
  \begin{equation}
    f_1^{\mathrm T}v_C = f_1^{\mathrm T}H_Z^{\mathrm T}w_1 = \partial_1^{\mathrm T}f_0^{\mathrm T}w_1,
  \end{equation}
  and so from \eqref{eq:zdist2}, we find
  \begin{equation}
    \partial_1^{\mathrm T}\left(f_0^{\mathrm T}w_1 + v_A\right) = 0,
  \end{equation}
  and since \(\ker G \cong \im \partial_1\), we also have \(\im G^{\mathrm T} \cong \ker \partial_1^{\mathrm T}\), so there exists a \(w_2\) such that \(G^{\mathrm T}w_2 = f_0^{\mathrm T}w_1 + v_A\). Now we have:
  \begin{equation}
    \begin{aligned}
      \widetilde H_Z^{\mathrm T} \begin{pmatrix} w_1 \\ w_2 \end{pmatrix} &= \begin{pmatrix} H_Z^{\mathrm T}w_1 \\ f_0^{\mathrm T}w_1 + G^{\mathrm T}w_2 \end{pmatrix} \\ &= \begin{pmatrix} v_C \\ f_0^{\mathrm T}w_1 + f_0^{\mathrm T}w_1 + v_A \end{pmatrix} = \begin{pmatrix} v_C \\ v_A \end{pmatrix} = v, 
    \end{aligned}
  \end{equation}
  therefore in the case where \(v_C^{\mathrm T}\) is a \(Z\)-stabilizer of the original code, \(v^{\mathrm T}\) is a stabilizer of the code in \eqref{eq:conecodeZ}. Thus, the \(Z\)-distance for \eqref{eq:conecodeZ} is at least \(d_Z\), and therefore the dressed \(Z\)-distance of \(\widetilde{\mathcal C}\) is also at least \(d_Z\).
\end{proof}

This framework is sufficiently general to encompass many known schemes, including both lattice surgery \cite{Horsman_2012} and generalizations of lattice surgery \cite{Cohen2021,cross2024linearsize} as we show in the following examples.

\begin{example}[Lattice surgery of surface codes~\cite{Horsman_2012}]
    Consider two surface codes of equal $X$-distance $d$, $\mathcal C_1 = \mathrm{CSS}(H_{X,1}, H_{Z,1})$ and $\mathcal C_2 = \mathrm{CSS}(H_{X,2}, H_{Z,2})$. Let $\bar X_1$ be a weight $d$ logical operator of $\mathcal C_1$ and $\bar X_2$ be a weight $d$ logical operator of $\mathcal C_2$ such that $\mathcal Q_1 = \supp \bar X_1 = \{q_{1,1}, q_{1,2}, \dots, q_{1,d}\}$, and $\mathcal Q_2 = \supp \bar X_2= \{q_{2,1}, q_{2,2}, \dots, q_{2,d}\}$ and suppose we wish to measure $\bar X_1 \bar X_2$.
    
    Define the chain complex $\mathcal C$ in \cref{eqn:code-chain-complex} by $H_X = H_{X, 1} \oplus H_{X, 2}$ and $H_Z = H_{Z, 1} \oplus H_{Z, 2}$. Define $H_Z|_{\mathcal Q_1}^{\mathrm{nz}}$ to be $H_Z$ restricted to the columns given by $\mathcal Q_1$ with zero rows removed (``$\mathrm{nz}$'' refers to no zeroes), and similarly for $\mathcal Q_2$. Define the chain $\mathcal A$ by the maps
    \begin{equation}\label{eqn:lattice-surgery-anc-stabs}
        \partial_1 = \begin{pmatrix}
            1 & 1 & & & \\
            & 1 & 1 & & \\
            &   & \ddots & \ddots & \\
            & & & 1 & 1
        \end{pmatrix}_{(d - 1) \times d}, \qquad
        \partial_0 = 0
    \end{equation}
    where $\partial_1$ is equal to both $H_Z|_{\mathcal Q_1}^{\mathrm{nz}}$ and $H_Z|_{\mathcal Q_2}^{\mathrm{nz}}$ up to permutations, and define the chain map $f: \mathcal A \to \mathcal C$ by
    \begin{equation}
        \left( f_1 \right)_{i,j} = \delta_{i, q_{1,j}} + \delta_{i, q_{2,j}},
    \end{equation}
    and
    \begin{equation}
        \left( f_0 \right)_{i,j} = \begin{cases}
            1 & \text{if } \begin{cases}(H_Z)_{i,q_{1,j}} = (H_Z)_{i,q_{1,j+1}} = 1 \text{ or,} \\ (H_Z)_{i,q_{2,j}} = (H_Z)_{i,q_{2,j+1}} = 1 \end{cases} \\
            0 & \text{otherwise}
        \end{cases},
    \end{equation}
    where $f_1$ has dimensions $n \times d$, $f_0$ has dimensions $n_Z \times (d-1)$. The cone code of $f$ gives the merged system after performing lattice surgery.

    Note that $f_0$ embeds the rows of $\partial_1$ into the corresponding rows of $H_Z$, and $f_1^{\mathrm T}$ has $d$ rows of weight two which sum to $\bar X_1 \bar X_2$. From the structure of $f_1$ and the fact that $\partial_1$ has even weight rows, we see that $\bar X_1 \bar X_2$ is a stabilizer of the resulting code and can therefore be measured.
\end{example}

\begin{example}[Generalized lattice surgery~\cite{Cohen2021}]\label{ex:gen-lattice-surgery}
    Consider measuring an $X$-logical operator $\bar X$ of weight $w$ of the CSS code $\mathcal C = \mathrm{CSS}(H_X, H_Z)$. Let $\mathcal{Q} = \{q_1, \dots, q_w\} = \supp \bar X$, $H_Z|_{\mathcal{Q}}$ be the matrix $H_Z$ restricted to the qubits in $\mathcal{Q}$, and $H_Z|_{\mathcal Q}^{\mathrm{nz}}$ be the restriction to non-zero rows of $H_Z|_{\mathcal Q}$. Consider the following parity-check matrix of dimensions $(r-1)\times r$ for the $[r, 1, r]$ classical repetition code:
    \begin{equation}
        H_r = \begin{pmatrix}
            1 & 1 & & & \\
            & 1 & 1 & & \\
            &   & \ddots & \ddots & \\
            & & & 1 & 1
        \end{pmatrix},
    \end{equation}
    where the choice \(r = d\) with $d$ the distance of $\mathcal C$ is used in general to ensure that distance is maintained \cite{Cohen2021}. The ancillary chain $\mathcal A$ is given by the hypergraph product code~\cite{tillich2014quantumLDPC}
    \begin{equation}
      \begin{aligned}
        \partial_1^{\mathrm T} &= \begin{pmatrix}I_r \otimes \partial_1^{\mathrm T} & H_r^{\mathrm T} \otimes I_w\end{pmatrix}, \\
        \partial_0 &= \begin{pmatrix}H_r \otimes I_{n_Z'} & I_{r-1} \otimes \partial_1\end{pmatrix},
      \end{aligned}
    \end{equation}
    where \(n_Z'\) is the number of rows of $H_Z|_{\mathcal Q}^{\mathrm{nz}}$.

    Let $h_j$ be the index of the \(j\)-th non-zero row of $H_Z|_{\mathcal Q}$, and we define the chain map $f: \mathcal A \to \mathcal C$ by
    \begin{equation}
        \left( f_1 \right)_{i,\, j} = \begin{cases}
            \delta_{i,\, q_j} & j \leq w \\
            0 & \text{otherwise,}
        \end{cases}
    \end{equation}
    and
    \begin{equation}
        \left( f_0 \right)_{i,\, j} = \begin{cases}
            \delta_{i,\, h_j} & j \leq n_Z' \\
            0 & \text{otherwise,}
        \end{cases}
    \end{equation}
    where $f_1$ is an $n \times rw$ matrix and $f_0$ is an ${n_Z \times (rn_Z' + (r-1)w)}$ matrix. The cone code \eqref{conecode-mats} then gives the generalized lattice surgery construction for measurement of $\bar X$.
\end{example}

\begin{example}[Generalized lattice surgery with reduced \(r\)~\cite{cross2024linearsize}]\label{ex:IBM-lattice-surgery}
  The choice of \(r = d\) in the previous \cref{ex:gen-lattice-surgery} is not necessarily optimal, but due to the potential presence of gauges, it is difficult to find an analysis of when a reduced choice of \(r\) is admissable. To address this, \cite{cross2024linearsize} includes all \(Z\)-logical operators of the ancilla system from \cref{ex:gen-lattice-surgery} as stabilizers in \(\partial_0\) so that the resulting cone code has no gauges. It is then proven (following a similar result in weight reduction \cite[Lemma 8(8)]{hastings2023quantum}) that the edge expansion of the Tanner graph of \(H_Z\) restricted to the variable nodes \(Q\) can be used to find an upper bound on \(r\) that will maintain distance.

  The new rows of \(\partial_0\) in this scheme are not necessarily low weight, and indeed there does not always exist a choice that could be considered LDPC. For example, the measurement of a weight \(d\) logical operator of a distance \(d\) toric code would give a row in \(\partial_0\) of weight \(d\). This shortcoming is addressed in our scheme.

  Note that in \cite{cross2024linearsize}, a parameter \(L\) is used instead of \(r\) with the relationship \(2r - 1 = L\) for odd \(L\). Even values of \(L\) give a slightly different construction that, while still falling neatly within the cone code framework, does not provide a useful measurement technique and will be addressed in \cref{app:cylinder}.
\end{example}

\Cref{ex:gen-lattice-surgery,ex:IBM-lattice-surgery} (after \cite{Cohen2021,cross2024linearsize}) along with another recent work \cite{zhang2024timeefficient} all have similar schemes for measuring individual logicals, but note that joint measurements are conducted differently across these three works. These examples show that the homological measurement framework elegantly includes many past schemes. In the remainder of the section, we work within the cone code framework to develop a more optimized measurement scheme.

\subsection{Measuring an \texorpdfstring{\(X\)}{X}-logical operator}\label{sec:single-log}

Let \(\mathcal C = \mathrm{CSS}(H_X, H_Z)\) be an \([\![n, k, d = \min\{d_X, d_Z\}]\!]\) CSS code and \(w_X\) and \(q_X\) (\emph{resp.} \(w_Z\) and  \(q_Z\)) be the maximum row and column weights of \(H_X\) (\emph{resp.} \(H_Z\)). Given an \(X\)-logical operator \(\bar X\) to measure, define \({\mathcal Q = \supp \bar X = \{q_1,\dots,q_w\}}\), where \(w = |\mathcal Q|\) is the weight of \(\bar X\) and $q_i < q_{i + 1}$ for all $i$ so that the qubit labels are ordered. This works identically for \(Z\)-logical operators by swapping the roles of \(X\) and \(Z\). To perform a joint measurement across multiple codes $\mathcal C_i = \mathrm{CSS}(H_{X,i}, H_{Z,i})$, let $\mathcal C = \mathrm{CSS}(\bigoplus_{i} H_{X,i}, \bigoplus_{i} H_{Z,i})$ and proceed as usual.

We begin by choosing $f_1$ to be the $n \times w$ matrix
\begin{equation} \label{eqn:f1_def}
    (f_1)_{i, j} = \delta_{i, q_j}.
\end{equation}
If we choose a $\partial_1$ with even weight rows, then the sum of the new $X$-stabilizers given by the rows $\begin{pmatrix}f_1^{\mathrm T} & \partial_1^{\mathrm T}\end{pmatrix}$ from \cref{conecode-mats} is $\begin{pmatrix}\bar X & 0\end{pmatrix}$ as desired. This choice of $f_1$ requires $f_0$ and $\partial_1$ to be chosen such that
\begin{equation} \label{eq:f0-f1-relationship}
    f_0\partial_1 = H_Zf_1 = H_Z|_{\mathcal Q},
\end{equation}
so that $f$ is a valid chain map (equivalently, to enforce commutativity of the stabilizer matrices in \cref{conecode-mats}). The notation $H_Z|_{\mathcal Q}$ means $H_Z$ restricted to the columns given by the qubits $Q$. From \cref{eq:f0-f1-relationship}, we see that all rows of $H_Z|_{\mathcal Q}$ must be in the row space of $\partial_1$, and in order to have a sparse choice of $f_0$, it should require very few rows of $\partial_1$ to generate rows of $H_Z|_{\mathcal Q}$, and rows of $\partial_1$ should not be re-used many times to represent different rows of $H_Z|_{\mathcal Q}$. We will construct a stabilizer code which often requires a nonzero choice for $\partial_0$ to ensure that there are no gauges, so the sparsity of $\partial_0$ must also be considered. Given this setup, the following \cref{thm:x-dist} gives a sufficient condition to ensure that the $X$-distance is maintained in the cone code.

\begin{theorem}[$X$-distance]\label{thm:x-dist}
    Given a CSS code defined by ${\mathcal C:C_2 \xrightarrow{H_X^{\mathrm T}} C_1 \xrightarrow{H_Z} C_0}$, a logical operator $\bar X$ we wish to measure from $\mathcal C$, a chain complex ${\mathcal A:A_1 \xrightarrow{\partial_1} A_0 \xrightarrow{\partial_0} A_{-1}}$ such that ${\ker \partial_0 \cong \im \partial_1}$, and a chain map ${f: \mathcal A \to \mathcal C}$ with $f_1$ as in \cref{eqn:f1_def}, the cone code of $f$ maintains the $X$-distance of $\mathcal C$ when the Cheeger constant of the (hyper)graph defined by interpreting $\partial_1$ as an edge-vertex incidence matrix is at least 1.
\end{theorem}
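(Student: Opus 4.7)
The plan is to start from an arbitrary nontrivial $X$-logical $v = \begin{pmatrix} v_C \\ v_A \end{pmatrix}$ of $\widetilde{\mathcal C}$, with $v_C$ living on the original qubits $C_1$ and $v_A$ on the ancilla block $A_0$, and to show $\wt(v) \geq d_X$. First I would unpack $\widetilde H_Z v = 0$ into $\partial_0 v_A = 0$ together with $H_Z v_C + f_0 v_A = 0$. The hypothesis $\ker \partial_0 \cong \im \partial_1$ lets me write $v_A = \partial_1 u$ for some $u \in A_1$, and substituting this together with the chain-map identity $f_0 \partial_1 = H_Z f_1$ yields $H_Z(v_C + f_1 u) = 0$, so that $v_C + f_1 u$ is an $X$-type operator of the original code $\mathcal C$.

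Second, I would argue that for every representative $u'$ of the coset $u + \ker \partial_1$, the vector $v_C + f_1 u'$ is a nontrivial $X$-logical of $\mathcal C$. The reason is that if $v_C + f_1 u' = H_X^{\mathrm T} w$ for some $w$, then using $\widetilde H_X^{\mathrm T} = \begin{pmatrix} H_X^{\mathrm T} & f_1 \\ 0 & \partial_1 \end{pmatrix}$ one has
\[
    v = \begin{pmatrix} H_X^{\mathrm T} w + f_1 u' \\ \partial_1 u' \end{pmatrix} = \widetilde H_X^{\mathrm T} \begin{pmatrix} w \\ u' \end{pmatrix} \in \im \widetilde H_X^{\mathrm T},
\]
contradicting nontriviality of $v$. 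Hence $\wt(v_C + f_1 u') \geq d_X$ for every valid $u'$.

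Third, I would invoke the Cheeger hypothesis to bound $\wt(v_A)$ from below in terms of $\wt(u')$. The design of $\partial_1$ forces $\partial_1 \mathbf 1 = 0$ (the even-row-weight condition that makes $\bar X$ appear as a stabilizer of $\widetilde{\mathcal C}$), so $\mathbf 1 \in \ker \partial_1$ and the coset $u + \ker \partial_1$ always contains a representative $u'$ with $\wt(u') \leq |A_1|/2$. For this $u'$ the Cheeger bound $h(\mathcal G) \geq 1$ gives $\wt(v_A) = \wt(\partial_1 u') \geq \wt(u')$. Since $f_1$ is the inclusion into $\mathcal Q$, $\wt(f_1 u') = \wt(u')$, and the triangle inequality $\wt(v_C) \geq \wt(v_C + f_1 u') - \wt(f_1 u') \geq d_X - \wt(u')$ combines with the Cheeger bound to give
\[
    \wt(v) = \wt(v_C) + \wt(v_A) \geq (d_X - \wt(u')) + \wt(u') = d_X.
\]

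The main obstacle I anticipate is the second step: pinning down that \emph{every} choice of coset representative $u'$ keeps $v_C + f_1 u'$ out of $\im H_X^{\mathrm T}$. This relies on the explicit structure of $\im \widetilde H_X^{\mathrm T}$ in the mapping cone, namely that the $A_0$-block of any element of $\im \widetilde H_X^{\mathrm T}$ is forced to lie in $\im \partial_1$, so that the freedom to shift $u \to u + \mathbf 1$ in the $C_1$-block is mirrored exactly by the freedom to augment by $\bar X$. Once that interplay between the triviality hypothesis on $v$ and the coset freedom exploited by the Cheeger argument is established, the weight bound follows by a routine combination of the triangle inequality and the Cheeger constant.
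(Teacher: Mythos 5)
Your proof is correct, and it establishes the same two facts the paper's proof rests on: every nontrivial $X$-logical of the cone code is, up to $X$-stabilizers, an unmeasured logical $\bar X_2$ of $\mathcal C$ times a subset of the new generators indexed by a vertex set $R\subseteq\mathcal Q$, and the Cheeger hypothesis forces the ancilla-block weight to compensate for any cancellation on the original qubits. The differences are in execution rather than substance. Where the paper simply asserts ``we must consider $\bar X_2\prod_{q\in R}S_q$,'' you derive this form from the cone algebra: $\partial_0 v_A=0$ plus $\ker\partial_0=\im\partial_1$ gives $v_A=\partial_1 u$, the chain-map identity gives $H_Z(v_C+f_1u)=0$, and the block structure of $\widetilde H_X^{\mathrm T}$ shows $v_C+f_1u'$ cannot be an $X$-stabilizer of $\mathcal C$ for \emph{any} coset representative $u'$ — this last point is exactly what licenses the paper's implicit assumption that $\bar X_2$ and $\bar X\bar X_2$ are both nontrivial, so your version is somewhat more airtight. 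For the weight bound, you replace the paper's explicit bookkeeping (the overlap parameter $i$, the split of $R$ into $R_{\mathrm{int}}$ and $R_{\mathrm{int}}^c$, and the case analysis on $|R|\lessgtr w/2$) with a single triangle inequality after choosing the representative $u'\in\{u,u+\mathbf 1\}$ of weight at most $w/2$; switching representatives is precisely the paper's ``replace $R$ by its complement and $\bar X_2$ by $\bar X\bar X_2$'' move, so the two case analyses coincide. The only nit is notational: $|A_1|$ should be $\dim A_1=w=|V|$, and you should note explicitly that $f_1$ sends distinct basis vectors to distinct basis vectors so that $\wt(f_1u')=\wt(u')$ and that $v_C$, $v_A$ have disjoint supports so weights add.
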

\begin{proof}
    Since no gauges are created and the original $X$-logical operators are still $X$-logical operators, we must consider the weight of an unmeasured logical operator $\bar X_2$ from the original code, say of weight $w_2$, multiplied by some collection of the new stabilizer generators from the rows of $\begin{pmatrix} f_1^{\mathrm T} & \partial_1^{\mathrm T} \end{pmatrix}$. Label the corresponding $w$ stabilizers $S_{q_1},\dots,S_{q_w}$ where $S_{q_i}$ has support in the qubits of the original code on only $q_i$. To ensure the $X$-distance is maintained, it is sufficient to show that for any subset $R \subseteq \mathcal Q$,
    \begin{equation}
        \wt\left(\bar X_2\prod_{q \in R}S_q\right) \geq d.
    \end{equation}

    Fix a subset $R \subseteq \mathcal Q$ and define the logical representative $L = \bar X_2\prod_{q \in R}S_q$. Let $R_{\text{int}} = R \cap \supp \bar X_2$ and $R_{\text{int}}^c = R \setminus \supp \bar X_2$. 
    Notice that $R$ is the disjoint union of $R_{\text{int}}$ and $R_{\text{int}}^c$. Suppose the size of the overlap between $\bar X$ and $\bar X_2$ is \(i = \left| \supp \bar X \cap \supp \bar X_2 \right|\). Because the weight of \(\bar X \bar X_2\) must be at least \(d\), we have
    \begin{align}
        w + w_2 - 2i \geq d &\implies |R_{\text{int}}| \leq i \leq \frac{w + w_2 - d}{2} \\
        &\implies 2|R_{\text{int}}| \leq w + w_2 - d
    \end{align}
    Let $h$ be the Cheeger constant of the graph with $w$ vertices defined by the incidence matrix $\partial_1$, recalling that $h \geq 1$ by supposition. The weight of $L$ on the qubits of $\mathcal C$ is $w_2 + |R_{\text{int}}^c| - |R_{\text{int}}|$ by definition, and on the qubits of $\mathcal A$ it is lower bounded by $h \cdot \min\{|R|, w - |R|\}$ by the definition of the Cheeger constant. If $|R| \leq w/2$, then
    \begin{align}
        \wt(L) &\geq w_2 + |R_{\text{int}}^c| - |R_{\text{int}}| + |R|h \\
        &\geq w_2 + |R_{\text{int}}^c| - |R_{\text{int}}| + |R| \\
        &\geq w_2 \geq d,
    \end{align}
    and if $|R| > w/2$, then noting that $|R_{\text{int}}^c| - |R| = -|R_{\text{int}}|$, we have
    \begin{align}
        \wt(L) &\geq w_2 + |R_{\text{int}}^c| - |R_{\text{int}}| + (w - |R|)h \\
        &\geq w_2 + |R_{\text{int}}^c| - |R_{\text{int}}| + w - |R| \\
        &= w + w_2 - 2|R_{\text{int}}| \\
        &\geq w + w_2 - (w + w_2 - d) = d
    \end{align}
    And so, we find that $\wt(L)\geq d$, and thus the $X$-distance of the cone code of $f$ is at least $d$.
\end{proof}

Following choices made in \cite{hastings2023quantum,Cohen2021,cross2024linearsize,zhang2024timeefficient}, we start by defining the matrix \(\partial_1^*\) to be the \(n_Z' \times w\) matrix obtained by restricting \(H_Z\) to the columns \(\mathcal Q\) followed by removing zero rows where \(n_Z'\) is the number of non-zero rows:
\begin{equation} \label{eqn:p1*_def}
    \partial_1^* = H_Z|_{\mathcal Q}^{\mathrm{nz}}
\end{equation}
The superscript-$*$ is to emphasize that this will not be our final choice, but is instead a step towards that choice. Let \(h_j\) be the row index of the \(j\)-th non-zero row of \(H_Z|_{\mathcal Q}\) and define the \(n_Z \times n_Z'\) matrix \(f_0^*\) by
\begin{equation} \label{eqn:f0*_def}
    (f_0^*)_{i, j} = \delta_{i, h_j}.
\end{equation}

\begin{remark}
    If we were to use $f_0^*$ and $\partial_1^*$ as in \cref{eqn:p1*_def,eqn:f0*_def} and choose $\partial_0 = 0$, the cone code would give us the $r=1$ construction of \cite{Cohen2021}. See also \cref{ex:gen-lattice-surgery}.
\end{remark}

\begin{remark}
An alternative way to obtain the same maps is to choose $\mathcal Q$ as above, and consider the following operations:
\begin{equation}\label{derive-f-partial}
    \begin{tikzcd}
        {\begin{pmatrix}H_Z & I_{n_Z} \\ I_n & \end{pmatrix}} \\
        {\begin{pmatrix}H_Z|_{\mathcal Q} & I_{n_Z} \\ f_1 & \end{pmatrix}} \\
        {\begin{pmatrix} \partial_1^* & {f_0^*}^{\mathrm T} \\ f_1 & \end{pmatrix}}
        \arrow["{\text{delete columns } \{1, \dots, n\} \setminus \mathcal Q}", from=1-1, to=2-1]
        \arrow["{\text{delete zero rows of $H_Z|_{\mathcal Q}$}}", from=2-1, to=3-1]
    \end{tikzcd}
\end{equation}
where we delete the columns of \(\begin{pmatrix} H_Z \\ I_n \end{pmatrix}\) corresponding to qubits \(\{1,\dots,n\} \setminus \mathcal Q\) followed by deleting the rows of \(\begin{pmatrix} H_Z|_{\mathcal Q} & I_{n_Z} \end{pmatrix}\) whose corresponding rows of \(H_Z|_{\mathcal Q}\) are 0.
\end{remark}

Notice that since $\bar X$ must commute with $Z$-stabilizers, rows of $\partial_1^*$ must have even weight. All modifications we make to $\partial_1^*$ to obtain the final $\partial_1$ will retain this feature so that the sum of the rows $\begin{pmatrix} f_1^{\mathrm T} & \partial_1^{\mathrm T} \end{pmatrix}$ gives $\begin{pmatrix}\bar X & 0\end{pmatrix}$. Further, $f_0^*$ is sparse and the row and column weights of $\partial_1^*$ are bounded by $w_Z$ and $q_Z$. The remaining design problems are finding a sparse choice of $\partial_0$ and modifying $\partial_1^*$ so that the graph it defines has a Cheeger constant of 1 so that the $X$-distance is maintained. To solve the former problem, we will use a combination of replacing hyperedges with usual edges followed by \emph{cellulation}. The latter problem is solved via application of \cref{alg:increase-cheeger} where new edges of the graph correspond to new rows of $\partial_1^*$ along with corresponding new zero-columns in $f_0^*$.

\begin{remark}
    A positive Cheeger constant implies that there is only a single connected component, which necessarily gives $\dim \ker \partial_1 = 1$, so by \cref{lem:ker-partial1}, only the desired operator is measured. Therefore joint measurements do not require any extra work after applying \cref{alg:increase-cheeger}.
\end{remark}

\begin{algorithm}[ht]
    \DontPrintSemicolon
    \SetAlgoLined
    \SetKwComment{Comment}{// }{}
    \KwInput{Hypergraph \(A = \mathcal G(V, E)\) with Cheeger constant \(h(A) < 1\).}
    \KwOutput{A hypergraph \(B\) with Cheeger constant \(h(B) = 1\), the same vertices as $A$, and a superset of the edges of $A$.}
    \BlankLine
    $E^* \gets E$ and $B \gets \mathcal G(V, E^*)$ \;
    \BlankLine
    \While{\(h(B) < 1\)}{
        \BlankLine
        \Comment*[l]{Find the sparsest cut:}
        \(S \gets \displaystyle \argmin_{S \subset V,\ |S| \leq |V|/2}\left\{\frac{|\partial S|}{|S|}\right\}\) where $\partial S$ is calculated using the edges $E^*$.
        \BlankLine
        \Comment*[l]{Add an appropriate edge:}
        $h^* \gets -\infty$ and initialize an edge $e$ that will be overwritten. \;
        \For{$v_1$ a vertex of minimum degree in $S$}{
            \For{$v_2$ a vertex of minimum degree in $V \setminus S$}{
                \If{$h(\mathcal G(V, E^* \cup \{(v_1, v_2)\})) > h^*$}{
                    $h^* \gets h(\mathcal G(V, E^* \cup \{(v_1, v_2)\}))$ \;
                    $e \gets (v_1, v_2)$ \;
                }
            }
        }
        $E^* \gets E^* \cup \{e\}$ and $B \gets \mathcal G(V, E^*)$ \;
    }
    \BlankLine
    \KwRet{\(B\)}
    \caption{Greedy algorithm to add edges to a graph to obtain a Cheeger constant of one.}
    \label{alg:increase-cheeger}
\end{algorithm}

\begin{example}\label{ex:6vert-cheeger}
    Consider a graph given by vertices
    \begin{equation*}
        V = \{v_1,\dots,v_6\},
    \end{equation*}
    and edges
    \begin{equation*}
            E = \{(v_1,v_2), (v_2,v_3), (v_4,v_5), (v_5,v_6)\},
    \end{equation*}
    which can be seen in \cref{fig:example4-a}. This can be obtained from $\partial_1^*$ in \cref{eqn:p1*_def} when considering measurement of the product of the $X$-logical operators across two distance $3$ surface codes. Applying \cref{alg:increase-cheeger}, three edges are added. This can be seen in \cref{fig:example4-b,fig:example4-c,fig:example4-d}.
    
    For concreteness, we show the incidence matrix $\partial_1$ at each stage with a superscript representing which subfigure of \cref{fig:example4} is described by the matrix. The labeling of the vertices and edges is consistent with the order of rows and columns. For the original graph as shown in \cref{fig:example4-a}.
    \begin{equation*}\begin{gathered}
        \partial_1^{(a)} = \begin{pmatrix}
            1 & 1 & & & & \\
             & 1 & 1 & & & \\
             & & & 1 & 1 & \\
             & & & & 1 & 1
        \end{pmatrix}, \quad \partial_1^{(b)} = \begin{pmatrix}
            1 & 1 & & & & \\
             & 1 & 1 & & & \\
             & & & 1 & 1 & \\
             & & & & 1 & 1 \\
             & & 1 & 1 & &
        \end{pmatrix}, \\
        \partial_1^{(c)} = \begin{pmatrix}
            1 & 1 & & & & \\
             & 1 & 1 & & & \\
             & & & 1 & 1 & \\
             & & & & 1 & 1 \\
             & & 1 & 1 & & \\
            1 & & & & & 1
        \end{pmatrix}, \quad \partial_1^{(d)} = \begin{pmatrix}
            1 & 1 & & & & \\
             & 1 & 1 & & & \\
             & & & 1 & 1 & \\
             & & & & 1 & 1 \\
             & & 1 & 1 & & \\
            1 & & & & & 1 \\
            1 & & & 1 & &
        \end{pmatrix}.
    \end{gathered}\end{equation*}
    As a consequence of this example, we see that a joint measurement of the $X$-logical operators across two distance 3 surface codes requires 7 ancillas.
\end{example}

\begin{figure}[ht]
    \centering
    \subfloat[\label{fig:example4-a}]{\includegraphics{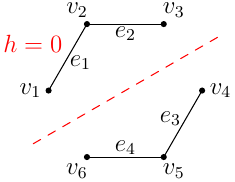}\ }
    \subfloat[\label{fig:example4-b}]{\ \includegraphics{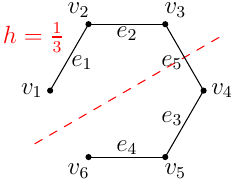}}\\
    \subfloat[\label{fig:example4-c}]{\includegraphics{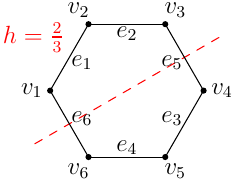}\ }
    \subfloat[\label{fig:example4-d}]{\ \includegraphics{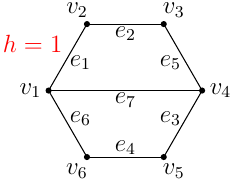}}
    \caption{Visualization of the application of \cref{alg:increase-cheeger} discussed in \cref{ex:6vert-cheeger}. The red dashed lines give a sparsest cut, while the value $h$ in red gives the Cheeger constant. Edges are added one at a time until a Cheeger constant of 1 is obtained leading from (a) to (d).}\label{fig:example4}
\end{figure}

After applying \cref{alg:increase-cheeger} to increase the Cheeger constant to 1, one could choose the most sparse option for $\partial_0$ among all matrices that satisfy $\ker \partial_0 \cong \im \partial_1$, in other words, $\partial_0$ is chosen to represent a cycle basis for the hypergraph as defined in \cref{def:cycle-basis}. This choice is analogous to the gauges that are promoted to stabilizers in \cite{cross2024linearsize}. However, when there are redundancies in the $Z$-stabilizer generators $H_Z$, this can often be improved. Consider the space of row vectors $V = \{v^T f_0 \ \vert \ v \in \ker H_Z^{\mathrm T}\}$ which are cycles that will be represented in $\widetilde H_Z$ regardless of the choice of $\partial_0$. With these in mind, choose $\partial_0$ such that the rows represent a basis for a subspace of all cycles $W$ such that $V \oplus W$ is the entire cycle space of the graph defined by $\partial_1$. A method for randomly searching such choices of $\partial_0$ for low weight options is given in \cref{alg:low-wt-p0}. If this is sparse enough, then the construction is complete.

\begin{algorithm}[ht]
    \DontPrintSemicolon
    \SetAlgoLined
    \SetKwComment{Comment}{// }{}
    \KwInput{$\partial_1$, $H_Z$, $f_0$, number of random samples $n$}
    \KwOutput{$\partial_0$}
    \BlankLine
    Define $V$ to be any matrix whose rows form a basis of $\left\{v^T f_0 \ \vert \ v \in \ker H_Z^{\mathrm T}\right\}$. \;
    Put $V$ in reduced row echelon form. \;
    Define $W$ to be any matrix such that $\ker W \cong \im \partial_1$. \;
    Add rows of $V$ to rows of $W$ to zero out the pivot columns of $V$ in $W$. \;
    Put $W$ in reduced row echelon form with zero-rows removed. \;
    Initialize $\partial_0 \gets W$
    \BlankLine
    \For{$i \in \{1,\dots,n\}$}{
        Let $A$ be a random, invertible matrix \cite{randall1993efficient}. \;
        Let $B$ be a random (not necessarily invertible) matrix. \;
        \If{the maximum row weight of $AW + BV$ is less than that of $\partial_0$}{
            $\partial_0 \gets AW + BV$. \;
        }
        \Comment*[l]{Frequently, not adding rows of $V$ gives lower weight. Check for this:}
        \If{the maximum row weight of $AW$ is less than that of $\partial_0$}{
            $\partial_0 \gets AW$. \;
        }
    }
    \BlankLine
    \KwRet{$\partial_0$}
    \caption{Random search for low weight $\partial_0$.}
    \label{alg:low-wt-p0}
\end{algorithm}

Unfortunately, there is no guarantee that $\partial_0$ will be sparse even for a qLDPC input code, in which case more work must be done. In order to improve the sparsity of $\partial_0$, take a step back and start with the incidence matrix $\partial_1^*$ as defined in \cref{eqn:p1*_def} before \cref{alg:increase-cheeger} has been applied, take all edges of the hypergraph that are defined on more than two vertices and replace them with any collection of (usual) edges defined only on two vertices each such that all vertices of the original edge are represented. For an edge defined on $v$ vertices, it suffices to choose $v/2$ edges for the replacement. The choice of these $v/2$ edges should be made such that the Cheeger constant is kept as high as possible. Representing all vertices from the original edges is necessary so that the row space of $\partial_1^*$ still includes $H_Z|_{\mathcal Q}$ as required by \cref{eq:f0-f1-relationship}, and $f_0$ must be modified to satisfy \cref{eq:f0-f1-relationship} by replacing each column corresponding to an expanded hyperedge on $v$ vertices with $v/2$ copies of that column. At this point, \cref{alg:increase-cheeger} should be applied to obtain a $\partial_1^*$ corresponding to a graph with only weight-two edges and a Cheeger constant of 1.

Next, apply \cref{alg:low-wt-p0} to obtain $\partial_0$. Now that we have a usual graph rather than a hypergraph, we can understand the cycles defined by the rows of $\partial_0$ more easily. If there is no cycle basis where all cycles are low weight, we can add edges across vertices inside of large cycles to create multiple smaller cycles, thus reducing the weight of $\partial_0$, a process called \emph{cellulation}. Here, the construction is complete and we obtain the final $f_1$, $f_0$, $\partial_1$, and $\partial_0$. The full construction is summarized in \cref{alg:main-construction}.

Parallel measurement refers to independent measurements of multiple logical operators, as opposed to a single joint measurement. Parallel measurements can be performed via iterative application of \cref{alg:main-construction}. When a measurement of an $X$-type operator is performed, the remaining $X$-logical operators are unchanged, so the next application of \cref{alg:main-construction} is straightforward. However, to measure an $X$-type and $Z$-type operator in parallel, after constructing the cone code for measurement of the $X$-type operator, the $Z$-type operator will have its support extended to some of the ancillas and this must be taken into account.

\begin{remark}[Bounds on connectivity in edge expanded homological measurement]
    The decongestion lemma~\cite{freedman2021building} can be used to build a cycle basis to define $\partial_0$ such that the connectivity of the cone code in edge expanded homological measurement is increased at most polylogarithmically in the weight of the measured logical operator as compared to the input code. We find that \Cref{alg:low-wt-p0} tends to result in less connectivity than the decongestion lemma, therefore we suggest the use of \cref{alg:low-wt-p0} for practical purposes.
\end{remark}

\begin{algorithm}[hbt!]
    \DontPrintSemicolon
    \SetAlgoLined
    \SetKwComment{Comment}{// }{}
    \KwInput{A code $\mathcal C = \mathrm{CSS}(H_X, H_Z)$ and $X$-logical operator $\bar X$}
    \KwOutput{A code $\widetilde{\mathcal C}$ with at least the distance of $\mathcal C$, with $\bar X$ in the stabilizer group, and with all other logical operators unharmed.}
    \BlankLine
    Define $f_1$, $\partial_1^*$, and $f_0^*$ as in \cref{eqn:f1_def,eqn:p1*_def,eqn:f0*_def}. \;
    Apply \cref{alg:increase-cheeger} to the incidence matrix $\partial_1^*$ to obtain a new incidence matrix $\partial_1$. \;
    Add zero columns to $f_0^*$ corresponding to the new edges from the previous step obtaining $f_0$ \;
    Apply \cref{alg:low-wt-p0} to $\partial_1$, $H_Z$, and $f_0$ to obtain $\partial_0$ \;
    \BlankLine
    \If{the sparsity of $\partial_0$ is deemed unacceptable}{
        $\partial_1 \gets \partial_1^*$ \;
        $f_1 \gets f_1^*$ \;
        \BlankLine
        \Comment*[l]{Expand hyperedges to weight-two edges}
        \For{each row $e$ of $\partial_1$ with $\wt e > 2$}{
            Replace $e$ with $(\wt e) / 2$ weight-two rows that sum to $e$ that keep the Cheeger constant as high as possible. \;
            Replace the column of $f_0$ that corresponds to $e$ with $(\wt e) / 2$ copies of that column.
        } 
        Apply \cref{alg:increase-cheeger} to the incidence matrix $\partial_1$ which adds new edges to $\partial_1$. \;
        Add zero columns to $f_0$ corresponding to the new edges from the previous step. \;
        \BlankLine
        \Comment*[l]{find a cycle basis}
        Apply \cref{alg:low-wt-p0} to $\partial_1$, $H_Z$, and $f_0$ to obtain $\partial_0$ \;
        \BlankLine
        \Comment*[l]{cellulate large cycles}
        \For{each row $c$ of $\partial_0$ with $\wt c$ higher than desired}{
            Add new edges (rows of $\partial_1$, along with corresponding zero-columns of $f_0$) within the cycle defined by $c$ to break it into smaller cycles. This results in replacing the high weight row $c$ with multiple lower weight rows corresponding to the new cycles. \;
        }
    }
    \BlankLine
    Define $\widetilde C$ to be the mapping cone of $f$ as in \cref{conecode}
    \BlankLine
    \KwRet{$\widetilde{\mathcal C}$}
    \caption{Main construction for the edge expanded homological measurement}
    \label{alg:main-construction}
\end{algorithm}

\begin{example}\label{ex:toric-d8}
    Cellulation can be useful for controlling the weights of $\partial_0$. Suppose we wish to measure a weight-8 $X$-logical operator, and the initial $\partial_1^*$ obtained from \cref{eqn:p1*_def} gives a cycle graph with 8 vertices seen in \cref{fig:example5-a} which has Cheeger constant $1/2$. The construction in \cite{cross2024linearsize} would give 24 ancilla qubits and a weight-8 $Z$-stabilizer corresponding to the weight of the single cycle in this graph. If the distance of the code is $d=8$, the construction in \cite{Cohen2021} does not create any new stabilizers above weight 4, but requires 120 ancilla qubits. Our edge expanded homological measurement construction results in 10 ancilla qubits without cellulation and 12 with cellulation with maximum stabilizer weights of 5, as seen below.
    
    In our construction, application of \cref{alg:increase-cheeger} adds two edges resulting in \cref{fig:example5-b} which increases the Cheeger constant to $1$. There is a cycle basis with 3 cycles of weight 5 which is easily seen due to the graph being planar. These three cycles result in 3 stabilizers of weight 5. For concreteness, given a particular ordering of qubits and stabilizers of the original code, this would result in the following $\partial_1$ and $\partial_0$:
    \begin{gather}
        \partial_1 = \begin{pmatrix}
            1 & 1 &   &   &   &   &   &   \\
              & 1 & 1 &   &   &   &   &   \\
              &   & 1 & 1 &   &   &   &   \\
              &   &   & 1 & 1 &   &   &   \\
              &   &   &   & 1 & 1 &   &   \\
              &   &   &   &   & 1 & 1 &   \\
              &   &   &   &   &   & 1 & 1 \\
            1 &   &   &   &   &   &   & 1 \\
            1 &   &   &   & 1 &   &   &   \\
              &   & 1 &   &   &   & 1 &
        \end{pmatrix}, \\ \partial_0 = \begin{pmatrix}
            1 & 1 & 1 & 1 &   &   &   &   & 1 &  \\
              &   & 1 & 1 & 1 & 1 &   &   &   & 1\\
              &   &   &   & 1 & 1 & 1 & 1 & 1 &  
        \end{pmatrix}.
    \end{gather}
    We can improve the weights of $\partial_0$ via cellulation. Since we are measuring an $X$ operator, each new $X$-stabilizer is given by a vertex of the graph, and the weight of the new $X$-stabilizer associated with a vertex is one more than the vertex degree. Therefore, when choosing a cellulation to try to reduce weight 5 operators, we avoid any choices that increase the vertex weights beyond 3. This tradeoff between the weights in $\partial_1$ and $\partial_0$ is a general feature of cellulation. Given this, the most reasonable cellulation choice is given in \cref{fig:example5-c} where we have reduced this to a single weight 5 cycle. For the same particular ordering of qubits and stabilizers in the above matrices, this gives us the following modified $\partial_1$ and $\partial_0$:
    \begin{gather}
        \partial_1 = \begin{pmatrix}
            1 & 1 &   &   &   &   &   &   \\
              & 1 & 1 &   &   &   &   &   \\
              &   & 1 & 1 &   &   &   &   \\
              &   &   & 1 & 1 &   &   &   \\
              &   &   &   & 1 & 1 &   &   \\
              &   &   &   &   & 1 & 1 &   \\
              &   &   &   &   &   & 1 & 1 \\
            1 &   &   &   &   &   &   & 1 \\
            1 &   &   &   & 1 &   &   &   \\
              &   & 1 &   &   &   & 1 &   \\
              & 1 &   & 1 &   &   &   &   \\
              &   &   &   &   & 1 &   & 1
        \end{pmatrix}, \\ \partial_0 = \begin{pmatrix}
            1 &   &   & 1 &   &   &   &   & 1 &   & 1 &   \\
              & 1 & 1 &   &   &   &   &   &   &   & 1 &   \\
              &   & 1 &   &   & 1 &   &   &   & 1 &   & 1 \\
              &   &   & 1 & 1 &   &   &   &   &   &   & 1 \\
              &   &   &   & 1 & 1 & 1 & 1 & 1 &   &   &
        \end{pmatrix}.
    \end{gather}
\end{example}

\begin{figure}
    \centering
    \subfloat[\label{fig:example5-a}]{\includegraphics{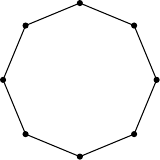}\qquad}
    \subfloat[\label{fig:example5-b}]{\qquad\includegraphics{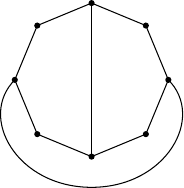}}\\
    \subfloat[\label{fig:example5-c}]{\includegraphics{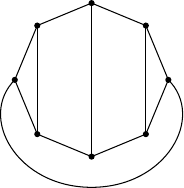}}
    \caption{The graph in (a) has a Cheeger constant of $1/2$. (b) is the result of applying \cref{alg:increase-cheeger} to (a) and has a Cheeger constant of 1. In (c), two of the three weight-5 cycles from (b) have been cellulated. See \cref{ex:toric-d8}.}
    \label{fig:example5}
\end{figure}

\begin{example}
    Consider the $[\![7,1,3]\!]$ Steane code given by the following stabilizer matrices:
    \begin{equation}\label{eq:steane-pcm}
        H_X = H_Z = \begin{pmatrix}
              &   &   & 1 & 1 & 1 & 1 \\
              & 1 & 1 &   &   & 1 & 1 \\
            1 &   & 1 &   & 1 &   & 1
        \end{pmatrix}.
    \end{equation}
    Measuring the logical operator $\bar X = \begin{pmatrix} 1&1&1&0&0&0&0 \end{pmatrix}$ gives the following stabilizer matrices:
    \begin{align}
        \widetilde H_X = \begin{pmatrix}
            H_X & \\ f_1^{\mathrm T} & \partial_1^{\mathrm T}
        \end{pmatrix} &= \begin{pNiceArray}{ccccccc|cc}
              &   &   & 1 & 1 & 1 & 1 &   &  \\
              & 1 & 1 &   &   & 1 & 1 &   &  \\
            1 &   & 1 &   & 1 &   & 1 &   &  \\\hline
            1 &   &   &   &   &   &   &   & 1 \\
              & 1 &   &   &   &   &   & 1 &   \\
              &   & 1 &   &   &   &   & 1 & 1
        \end{pNiceArray} \\
        \widetilde H_Z = \begin{pmatrix}
            H_Z & f_0
        \end{pmatrix} &= \begin{pNiceArray}{ccccccc|cc}
              &   &   & 1 & 1 & 1 & 1 &   & \\
              & 1 & 1 &   &   & 1 & 1 & 1 & \\
            1 &   & 1 &   & 1 &   & 1 &   & 1
        \end{pNiceArray}.
    \end{align}
    This is defines a $[\![ 9, 0 ]\!]$ code. Notice that the last three rows of $\widetilde H_X$ sum to $\begin{pmatrix} \bar X & 0 \end{pmatrix}$. Since $k$ drops to zero, there is no minimum distance to consider. See \cref{fig:f-partial-tanner} for a Tanner graph representation of $\partial_1$, $f_1$, and $f_0$ for this example. Since there are no cycles in the graph given by $\partial_1$, $\partial_0$ does not appear in this example.
\end{example}

\begin{figure}[t]
    \centering
    \subfloat[]{\includegraphics{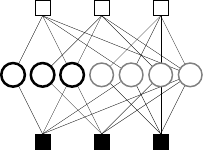}\quad}
    \subfloat[]{\quad\includegraphics{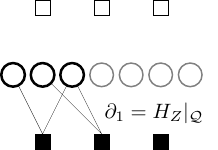}}\\
    \subfloat[]{\includegraphics{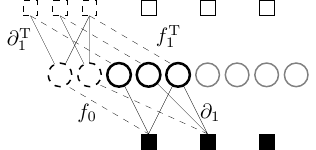}}
    \caption{A Tanner graph depicting measurement in the Steane code of the logical $X$ operator supported on the first three qubits shown with a thick border. The variable nodes with a dashed border represent ancilla. The original code is shown in (a), the restriction of $H_Z$ to the support of the logical operator to obtain $\partial_1$ is shown in (b), and $f_1$, $f_0$, and $\partial_1$ are shown in (c).}
    \label{fig:f-partial-tanner}
\end{figure}

\begin{example}
    To see when a low Cheeger constant can cause a decrease in distance, consider the $[\![ 15, 7, 3 ]\!]$ quantum Hamming code given by the stabilizer matrices
    \begin{equation}
        H_X = H_Z = \begin{pmatrix}
              &   &   &   &   &   &   & 1 & 1 & 1 & 1 & 1 & 1 & 1 & 1 \\
              &   &   & 1 & 1 & 1 & 1 &   &   &   &   & 1 & 1 & 1 & 1 \\
              & 1 & 1 &   &   & 1 & 1 &   &   & 1 & 1 &   &   & 1 & 1 \\
            1 &   & 1 &   & 1 &   & 1 &   & 1 &   & 1 &   & 1 &   & 1
        \end{pmatrix}
    \end{equation}
    and the logical-$X$ operator $\bar X = X_{3}X_{4}X_{5}X_{12}X_{14}$. Without applying \cref{alg:increase-cheeger} to improve the Cheeger constant of the graph given by $\partial_1$, we obtain the $[\![ 19, 6, 2 ]\!]$ code with stabilizer matrices:
    \begin{align}
        \widetilde H_X &= \begin{pmatrix}
              &   &   &   &   &   &   & 1 & 1 & 1 & 1 & 1 & 1 & 1 & 1 &   &   &   &   \\
              &   &   & 1 & 1 & 1 & 1 &   &   &   &   & 1 & 1 & 1 & 1 &   &   &   &   \\
              & 1 & 1 &   &   & 1 & 1 &   &   & 1 & 1 &   &   & 1 & 1 &   &   &   &   \\
            1 &   & 1 &   & 1 &   & 1 &   & 1 &   & 1 &   & 1 &   & 1 &   &   &   &   \\
              &   & 1 &   &   &   &   &   &   &   &   &   &   &   &   &   &   & 1 & 1 \\
              &   &   & 1 &   &   &   &   &   &   &   &   &   &   &   &   & 1 &   &   \\
              &   &   &   & 1 &   &   &   &   &   &   &   &   &   &   &   & 1 &   & 1 \\
              &   &   &   &   &   &   &   &   &   &   & 1 &   &   &   & 1 & 1 &   &   \\
              &   &   &   &   &   &   &   &   &   &   &   &   & 1 &   & 1 & 1 & 1 &  
        \end{pmatrix} \\
        \widetilde H_Z &= \begin{pmatrix}
              &   &   &   &   &   &   & 1 & 1 & 1 & 1 & 1 & 1 & 1 & 1 & 1 &   &   &   \\
              &   &   & 1 & 1 & 1 & 1 &   &   &   &   & 1 & 1 & 1 & 1 &   & 1 &   &   \\
              & 1 & 1 &   &   & 1 & 1 &   &   & 1 & 1 &   &   & 1 & 1 &   &   & 1 &   \\
            1 &   & 1 &   & 1 &   & 1 &   & 1 &   & 1 &   & 1 &   & 1 &   &   &   & 1
        \end{pmatrix}.
    \end{align}
    The weight-two logical operators of the new code are given by $X_{1}X_{19}$, $X_{2}X_{18}$, and $X_{8}X_{16}$. The Cheeger constant of the hypergraph given by $\partial_1$ is $0.5$. Only two additional edges (and therefore two additional ancilla), which can be found by applying \cref{alg:increase-cheeger}, are necessary to maintain the distance.
\end{example}

\subsection{Fault tolerant protocol for homological measurements}\label{sec:physical-considerations}

 The mathematical framework we have described is a good tool for describing the final state of the system, but does not prescribe a specific protocol. In this section, we describe the steps starting from the input code described by check matrices $H_X$ and $H_Z$ and leading to the final code described by $\widetilde{H}_X$ and $\widetilde{H}_Z$ given in~\cref{conecode}. It closely follows the steps used to perform lattice surgery introduced in \cite{Horsman_2012}. This protocol applies to any homological measurement, not just edge expanded homological measurement.

\begin{itemize}
    \item[1)] \textbf{Initialize ancilla:} Prepare the ancilla qubits $A_0$ in the $\lvert 0 \rangle^{\otimes (\dim A_0)}$ state for an $X$ measurement (or $\lvert + \rangle^{\otimes (\dim A_0)}$ for a $Z$ measurement). At this point, the physical system is stabilized by
    \begin{equation}
        H_X' = \begin{pmatrix}H_X & 0\end{pmatrix}, \quad
        H_Z' = \begin{pmatrix}H_Z & \\ & I\end{pmatrix}.
    \end{equation}

    \item[2)] \textbf{Measure the new $X$ stabilizers:} Measure each of the new $X$ stabilizers from the rows $\begin{pmatrix} f_1^{\mathrm T} & \partial_1^{\mathrm T} \end{pmatrix}$ of $\widetilde H_X$. The measurement results are uniformly random and must be recorded. Since
    \begin{equation}
        \mathrm{rowspace}\begin{pmatrix}H_Z & \\ & I\end{pmatrix} = \mathrm{rowspace}\begin{pmatrix}H_Z & f_0 \\ & I\end{pmatrix},
    \end{equation}
    and the rows $\begin{pmatrix}H_Z & f_0\end{pmatrix}$ commute with the measurements by construction, only the rows $\begin{pmatrix} 0 & I\end{pmatrix}$ of the latter matrix anticommute and are replaced with $\begin{pmatrix} 0 & G\end{pmatrix}$, where the rows of $G$ generate the subspace that commutes with $\partial_1^{\mathrm T}$, i.e., $G$ is a matrix such that $\im \partial_1 = \ker G \subseteq \ker \partial_0$. The result is a physical system that is stabilized by
    \begin{equation}
        H_X' = \begin{pmatrix}H_X & \\ f_1^{\mathrm T} & \partial_1^{\mathrm T}\end{pmatrix}, \quad
        H_Z' = \begin{pmatrix}H_Z & f_0 \\ & G\end{pmatrix},
    \end{equation}
    where the $Z$-stabilizers $\begin{pmatrix}0 & \partial_0\end{pmatrix}$ are included in the row space of $\begin{pmatrix}0 & G\end{pmatrix}$.

    \item[3)] \textbf{Measure the $Z$ stabilizers:} Measure the stabilizers $\begin{pmatrix}H_Z & f_0\end{pmatrix}$. If $\mathrm{rowspace}(\partial_0) \subsetneq \mathrm{rowspace}(G)$, treat the stabilizers in $\mathrm{rowspace}(G) \setminus \mathrm{rowspace}(\partial_0)$ as fixed gauges and do not measure them to avoid obtaining potentially high weight stabilizer generators.

    \item[4)] \textbf{Measure the stabilizers of $\widetilde C$ multiple times and perform error correction:} While measuring the new stabilizers in $\widetilde H_X$ gives random results since they anti commute with some existing $Z$-type stabilizers, their product gives the logical operator being measured. In the ideal case, that would be sufficient to infer the results of the logical measurement with certainty, but in the presence of errors the measurement of the stabilizers can be repeated over time a sufficient amount of time, typically a number of times equal to the distance of the initial code, and the results fed into a decoder in order to infer with high confidence the outcome of the logical measurement. Since in general there are other logical qubits remaining in the code, this is also required to maintain their protection.
    
    \item[5)] \textbf{Measure out ancillas to return to original code:} Measure the ancillas in the $Z$ basis. Use the resulting values to determine the resulting code space. Note that this must be done in a fault-tolerant way, since the measurements can be prone to errors.
\end{itemize}

\subsection{Jointly measuring a mixture of \texorpdfstring{\(X\)}{X}-, \texorpdfstring{\(Y\)}{Y}-, and \texorpdfstring{\(Z\)}{Z}-logical operators}\label{sec:non-css-meas}

To measure a logical operator $L$ that is a mixture of $X$-, $Y$-, and $Z$-logical operators, write the operator as a product of $X$ and $Z$-logical operators (including phases if required). Note that one needs to keep track of an overall sign in order to interpret the logical measurement correctly. First, apply \cref{alg:increase-cheeger} separately for $\bar X$ and $\bar Z$ obtaining $f_1^{(X)}$, $f_0^{(X)}$, and $\partial_1^{(X)}$ for $\bar X$ and $f_1^{(Z)}$, $f_0^{(Z)}$, and $\partial_1^{(Z)}$ for $\bar Z$. Note that the roles of $H_X$ and $H_Z$ are swapped when measuring $\bar Z$. We do not find a cycle basis yet at this point. Consider the following matrices which are \emph{not} the final product:
\begin{align}
    \widetilde H_X^{(\text{first step})} &= \begin{pmatrix}
        H_X & & f_0^{(Z)} \\
        {f_1^{(X)}}^{\mathrm T} & {\partial_1^{(X)}}^{\mathrm T} &
    \end{pmatrix} \label{eq:Y-firststep-HX} \\
    \widetilde H_Z^{(\text{first step})} &= \begin{pmatrix}
        H_Z & f_0^{(X)} & \\
        {f_1^{(Z)}}^{\mathrm T} & & {\partial_1^{(Z)}}^{\mathrm T}
    \end{pmatrix} \label{eq:Y-firststep-HZ}.
\end{align}

There are two possibilities after this first step. Either the two matrices do not commute, or if they do, the code they implement would measure $\bar X$ and $\bar Z$ separately. To fix both problems, we will merge some pairs of stabilizers, i.e., replace stabilizer generators with their product. If the matrices anticommute, then this will be due to stabilizer generators with a single overlapping qubit in the blocks ${f_1^{(X)}}^{\mathrm T}$ and ${f_1^{(Z)}}^{\mathrm T}$, corresponding to the physical Pauli-$Y$ operators that are being measured. Merge each anticommuting pair of stabilizer generators. If the matrices commute, merge any $X$ stabilizer from the rows starting with ${f_1^{(X)}}^{\mathrm T}$ with any $Z$ stabilizer from the rows starting with ${f_1^{(Z)}}^{\mathrm T}$. Any choice will have the desired outcome, so choose the lowest weight stabilizers. This results in a valid non-CSS stabilizer code given by the following stabilizer matrix in symplectic form where $f_1^{(X \setminus Z)}$, $f_1^{(Z \setminus X)}$, $\partial_1^{(X \setminus Z)}$, and $\partial_1^{(Z \setminus X)}$ are what remains after merging, and $f_1^{(Y|_X)}$, $f_1^{(Y|_Z)}$, $\partial_1^{(Y|_X)}$, and $\partial_1^{(Y|_Z)}$ are created according to the merging:
\begin{equation}
    \begin{pNiceArray}{ccc|ccc}
        H_X & & f_0^{(Z)} & & & \\
        {f_1^{(X \setminus Z)}}^{\mathrm T} & {\partial_1^{(X \setminus Z)}}^{\mathrm T} & & & & \\
        {f_1^{(Y|_X)}}^{\mathrm T} & {\partial_1^{(Y|_X)}}^{\mathrm T} & & {f_1^{(Y|_Z)}}^{\mathrm T} & & {\partial_1^{(Y|_Z)}}^{\mathrm T} \\
        & & & H_Z & f_0^{(X)} & \\
        & & & {f_1^{(Z \setminus X)}}^{\mathrm T} & & {\partial_1^{(Z \setminus X)}}^{\mathrm T}
    \end{pNiceArray}
\end{equation}

This code does include $L$ in its stabilizer group, but it also contains new logical operators that may be low weight, which is the same issue that arises in \cref{sec:single-log} requiring $\partial_0$. To fix this, we return to the topic of finding a cyclic basis as in \cref{sec:single-log}. The new undesired logical operators are given by the cycles of a particular graph. This graph is given by the union of the graphs defined by $\partial_1^{(X)}$ and $\partial_1^{(Z)}$ with vertices between the two identified if their respective stabilizers were merged, i.e., the incidence matrix of the graph is
\begin{equation}\label{eq:merged-incidence-mat}
    \partial_1 = \begin{pmatrix}
        \partial_1^{(X \setminus Z)} & \partial_1^{(Y|_X)} & \\
        & \partial_1^{(Y|_Z)} & \partial_1^{(Z \setminus X)}
    \end{pmatrix}.
\end{equation}
Find a cycle basis for the graph using \cref{alg:low-wt-p0} which we notate
\begin{equation}
    \partial_0 = \begin{pmatrix}
        \partial_0^{(X)} & \partial_0^{(Z)}
    \end{pmatrix},
\end{equation}
where edges from the top (\emph{resp.}, bottom) block row in \cref{eq:merged-incidence-mat} are represented in $\partial_0^{(X)}$ (\emph{resp.}, $\partial_0^{(Z)}$). If the row weights of $\partial_0$ are higher than desired, cellulation can be applied where now one may choose for each added edge whether to included it in the graph for $X$ or the graph for $Z$. The resulting code is given by the following stabilizer matrix in symplectic form:
\begin{equation}\label{eq:joint_final}
    \begin{pNiceArray}{ccc|ccc}
        H_X & & f_0^{(Z)} & & & \\
        {f_1^{(X \setminus Z)}}^{\mathrm T} & {\partial_1^{(X \setminus Z)}}^{\mathrm T} & & & & \\
        {f_1^{(Y|_X)}}^{\mathrm T} & {\partial_1^{(Y|_X)}}^{\mathrm T} & & {f_1^{(Y|_Z)}}^{\mathrm T} & & {\partial_1^{(Y|_Z)}}^{\mathrm T} \\
        & & & H_Z & f_0^{(X)} & \\
        & & & {f_1^{(Z \setminus X)}}^{\mathrm T} & & {\partial_1^{(Z \setminus X)}}^{\mathrm T} \\
        & & \partial_0^{(Z)} & & \partial_0^{(X)} & 
    \end{pNiceArray}
\end{equation}

The dimension of the resulting code is one less than that of the input code by construction. To see that the resulting distance is always preserved, consider an error $\vec e$ written in symplectic form and decompose it into $\vec e = \vec e_X + \vec e_Z$ for the $X$ and $Z$ parts of the error. Note that $\wt{\vec e} \geq \min\{\wt{\vec e_X}, \wt{\vec e_Z}\}$ and apply the arguments in the proof of \cref{thm:z-dist} to $\vec e_X$ and $\vec e_Z$ independently. The distance will often increase because the $X$-type measurement frequently increases the $Z$-distance, and vice versa.

\section{Examples and Numerical simulations} \label{sec:numerics}

In this Section, we first provide the parameters of the codes obtained after the logical measurements have been performed on a few examples. We next benchmark our fault-tolerant logical operator measurement protocol for a measurement-based quantum computer that uses GKP qubits concatenated in various quantum LDPC code.

\subsection{Examples} \label{subsec:examples}

We consider two classes of quantum LDPC codes: (a) lifted product (LP) codes \cite{Panteleev2022, Raveendran2022finiterateqldpcgkp} and (b) hypergraph product (HGP) codes \cite{tillich2014quantumLDPC, Roffe_QLPDC_decoding}. We focus on these particular code families as their distance scale with $n$, have constant encoding rate \cite{Panteleev2022, tillich2014quantumLDPC, Bravyi2024_IBM_GeneralizedBicycle}, and perform well with the photonic architecture we consider here, though we emphasize that the procedure presented in~\cref{sec:hom-meas} can be used with any code. To demonstrate homological measurement through our examples, for each code, we consider an $X$-logical operator for which the weight is equal to the distance of its code, but they have been otherwise chosen randomly.

\vspace*{0.1in}
\noindent\underline{Lifted product codes}
\vspace*{0.05in}

The first class of codes that we consider for our example are the lifted product codes which were first proposed by Panteleev and Kalachev \cite{Panteleev2022}. The lifted product code construction is the lifted version of the hypergraph product code construction.

Given a choice of \emph{lift size} $\ell$, consider the polynomial quotient ring $R_\ell = \mathbb{F}_2[x]/(x^\ell-1)$. For a polynomial $g(x) = g_0 + g_1x + \cdots + g_{\ell-1}x^{\ell-1} \in R_\ell$, define $g^{\mathrm T}(x) = g_0 + g_{\ell-1}x + \cdots + g_{1}x^{\ell-1} \in R_\ell$. The lift of $g(x)$ is an $\ell\times\ell$ circulant matrix $\mathbb{B}(g(x))$ where the first column is given by the coefficients of $g(x)$ and each subsequent column is obtained by a cyclic shifting down one index, i.e., the $i$-th column is given by the coefficients of $x^{i - 1}g(x)$. Similarly, the lift $\mathbb{B}(A)$ of a matrix $A$ over $R_\ell$ is obtained by replacing each element of $A$ by its lift.

 Let $A_1$ and $A_2$ be two matrices of size $m_1\times n_1$ and $m_2\times n_2$ respectively with entries in $R_\ell$. The quasi-cyclic lifted product code $\mathrm{LP}(A_1, A_2)$ is the CSS code of length $\ell(n_1m_2+n_2m_1)$ with stabilizer matrices
 \begin{align}
     H_X &= \mathbb{B}\left(\begin{bmatrix}A_1 \otimes I & I \otimes A_2\end{bmatrix}\right),\\
     H_Z &= \mathbb{B}\left(\begin{bmatrix}I \otimes A_2^T & A_1^T \otimes I\end{bmatrix}\right). 
 \end{align}

In these simulations, we choose lifted product codes given by matrices $A_1 = A_2^T = A$ whose entries are monomials in $R_\ell$. The matrices are explicitly shown in~\cref{tab:LP_code_matrices}.

\begin{table}[hbt!]
    \centering
    \begin{tabular}{|c|c|c|}
    \hline
         LP Code & $\ell$ & Matrix over $R_\ell$ \\ \hline\hline
         $\mathrm{LP}_1$ & 7 & $\begin{pmatrix}
             1 & 1 & 1 & 1\\1 & x & x^2 & x^5\\1 & x^6 & x^3 & x
         \end{pmatrix}$ \\
         $\mathrm{LP}_2$ & 9 & $\begin{pmatrix}
             1 & 1 & 1 & 1\\1 & x & x^6 & x^7\\1 & x^4 & x^5 & x^2
         \end{pmatrix}$ \\
         \hline
    \end{tabular}
    \caption{Matrices defining the lifted product codes used for simulations.}
    \label{tab:LP_code_matrices}
\end{table}

\vspace*{0.1in}
\noindent\underline{Hypergraph product codes}
\vspace*{0.05in}

The second class of codes that we consider for our simulations are the hypergraph product codes, first proposed by 
Tillich and Zémor \cite{tillich2014quantumLDPC}. The Tanner graph of a hypergraph product code is constructed by taking the graph product of the Tanner graphs of two classical codes.

Consider classical codes $C_1$ and $C_2$, and their respective parity-check matrices $H_1$ and $H_2$. Let $C_i$ have parameters $[n_i, k_i, d_i]$. Their transposed codes of $C_1$ is denoted by $C_i^T$ and has the parity-check matrix $H_i^T$ and parameters $[m_i, k_i^T, d_i^T]$. When $H_i^T$ is a full rank matrix, $C_i^T$ encodes no logical information and has infinite distance, i.e., $k_i^T=0$ and $d_i^T = \infty$. The hypergraph product code of $C_1$ and $C_2$ is denoted $\mathrm{HGP}(C_1, C_2)$ and has stabilizer matrices
\begin{align}
    H_X &= \begin{pmatrix} H_1 \otimes I & I \otimes H_2^{\mathrm T} \end{pmatrix}, \\
    H_Z &= \begin{pmatrix} I \otimes H_2 & H_1^{\mathrm T} \otimes I \end{pmatrix},
\end{align}
and parameters
\begin{align}
[\![n_1n_2+m_1m_2, k_1k_2 + k_1^Tk_2^T, \mathrm{min}(d_1,d_2,d_1^T,d_2^T)]\!].
\end{align}

The specific HGP codes we consider are built from $H = H_1 = H_2$, where $H$ is a parity-check matrix of a classical LDPC code with column weights of four and row weights of three. These classical LDPC codes obtained using the MacKay-Neal method \cite{MacKay_Neal1996,Roffe_QLPDC_decoding}. We consider two hypergraph product codes, denoted $\mathrm{HGP}_1$ and $\mathrm{HGP}_2$, given by the following parity-check matrices:
{\renewcommand{\arraystretch}{0.6}
\begin{align}
    H_{\mathrm{HGP}_1} = \left(\begin{smallmatrix}
0 & 0 & 0 & 1 & 1 & 0 & 0 & 0 & 0 & 0 & 0 & 0 & 0 & 0 & 0 & 0 & 0 & 1 & 0 & 1 \\
0 & 1 & 0 & 0 & 0 & 0 & 1 & 0 & 0 & 0 & 0 & 0 & 0 & 0 & 1 & 0 & 0 & 0 & 0 & 1 \\
0 & 0 & 0 & 0 & 0 & 1 & 0 & 0 & 0 & 1 & 0 & 0 & 0 & 0 & 0 & 1 & 1 & 0 & 0 & 0 \\
0 & 0 & 1 & 0 & 0 & 0 & 1 & 0 & 0 & 0 & 0 & 0 & 0 & 1 & 0 & 0 & 1 & 0 & 0 & 0 \\
0 & 0 & 0 & 0 & 0 & 0 & 0 & 0 & 1 & 1 & 0 & 1 & 0 & 0 & 0 & 0 & 0 & 0 & 1 & 0 \\
0 & 0 & 0 & 0 & 1 & 0 & 0 & 0 & 0 & 0 & 1 & 0 & 0 & 0 & 1 & 1 & 0 & 0 & 0 & 0 \\
0 & 0 & 0 & 0 & 0 & 0 & 0 & 1 & 0 & 1 & 0 & 0 & 1 & 0 & 0 & 0 & 0 & 1 & 0 & 0 \\
0 & 1 & 1 & 1 & 0 & 0 & 0 & 0 & 0 & 0 & 0 & 0 & 1 & 0 & 0 & 0 & 0 & 0 & 0 & 0 \\
0 & 0 & 0 & 0 & 0 & 1 & 0 & 0 & 1 & 0 & 0 & 0 & 1 & 0 & 1 & 0 & 0 & 0 & 0 & 0 \\
0 & 0 & 0 & 0 & 0 & 0 & 0 & 1 & 0 & 0 & 1 & 1 & 0 & 0 & 0 & 0 & 1 & 0 & 0 & 0 \\
0 & 0 & 0 & 1 & 0 & 1 & 0 & 1 & 0 & 0 & 0 & 0 & 0 & 0 & 0 & 0 & 0 & 0 & 1 & 0 \\
1 & 0 & 0 & 0 & 0 & 0 & 0 & 0 & 0 & 0 & 0 & 1 & 0 & 1 & 0 & 0 & 0 & 0 & 0 & 1 \\
1 & 0 & 0 & 0 & 1 & 0 & 1 & 0 & 0 & 0 & 0 & 0 & 0 & 0 & 0 & 0 & 0 & 0 & 1 & 0 \\
1 & 0 & 1 & 0 & 0 & 0 & 0 & 0 & 1 & 0 & 0 & 0 & 0 & 0 & 0 & 1 & 0 & 0 & 0 & 0 \\
0 & 1 & 0 & 0 & 0 & 0 & 0 & 0 & 0 & 0 & 1 & 0 & 0 & 1 & 0 & 0 & 0 & 1 & 0 & 0 \\
\end{smallmatrix}\right),
\end{align}
and
\begin{align}
    H_{\mathrm{HGP}_2} = \left(\begin{smallmatrix}
0 & 1 & 0 & 0 & 0 & 0 & 0 & 0 & 0 & 0 & 0 & 0 & 0 & 0 & 1 & 1 & 0 & 0 & 1 & 0 & 0 & 0 & 0 & 0 \\
0 & 0 & 0 & 0 & 0 & 0 & 1 & 1 & 0 & 0 & 0 & 0 & 0 & 0 & 0 & 1 & 0 & 0 & 0 & 0 & 0 & 0 & 1 & 0 \\
0 & 0 & 0 & 1 & 0 & 0 & 0 & 0 & 1 & 0 & 0 & 0 & 0 & 0 & 0 & 0 & 0 & 0 & 0 & 1 & 0 & 0 & 0 & 1 \\
0 & 0 & 1 & 0 & 0 & 0 & 0 & 0 & 0 & 0 & 0 & 1 & 0 & 1 & 0 & 0 & 0 & 0 & 0 & 0 & 0 & 0 & 0 & 1 \\
0 & 0 & 0 & 0 & 0 & 0 & 0 & 0 & 0 & 1 & 0 & 1 & 0 & 0 & 0 & 0 & 1 & 0 & 0 & 1 & 0 & 0 & 0 & 0 \\
0 & 0 & 0 & 0 & 0 & 1 & 0 & 0 & 1 & 0 & 0 & 0 & 0 & 0 & 0 & 0 & 0 & 1 & 0 & 0 & 0 & 0 & 1 & 0 \\
1 & 1 & 0 & 0 & 0 & 0 & 0 & 1 & 0 & 0 & 0 & 0 & 0 & 0 & 0 & 0 & 1 & 0 & 0 & 0 & 0 & 0 & 0 & 0 \\
0 & 0 & 0 & 0 & 0 & 1 & 0 & 0 & 0 & 0 & 0 & 0 & 1 & 1 & 1 & 0 & 0 & 0 & 0 & 0 & 0 & 0 & 0 & 0 \\
1 & 0 & 0 & 0 & 0 & 0 & 0 & 0 & 0 & 0 & 1 & 0 & 0 & 0 & 0 & 0 & 0 & 0 & 1 & 0 & 0 & 0 & 0 & 1 \\
0 & 0 & 0 & 0 & 1 & 0 & 0 & 0 & 0 & 0 & 0 & 0 & 0 & 0 & 0 & 0 & 1 & 0 & 0 & 0 & 0 & 1 & 1 & 0 \\
0 & 1 & 0 & 0 & 1 & 0 & 1 & 0 & 0 & 0 & 0 & 1 & 0 & 0 & 0 & 0 & 0 & 0 & 0 & 0 & 0 & 0 & 0 & 0 \\
0 & 0 & 0 & 1 & 0 & 0 & 0 & 1 & 0 & 0 & 0 & 0 & 0 & 0 & 1 & 0 & 0 & 1 & 0 & 0 & 0 & 0 & 0 & 0 \\
0 & 0 & 0 & 0 & 0 & 0 & 0 & 0 & 1 & 0 & 1 & 0 & 0 & 0 & 0 & 0 & 0 & 0 & 0 & 0 & 1 & 1 & 0 & 0 \\
1 & 0 & 0 & 0 & 1 & 0 & 0 & 0 & 0 & 0 & 0 & 0 & 0 & 0 & 0 & 0 & 0 & 0 & 0 & 1 & 1 & 0 & 0 & 0 \\
0 & 0 & 1 & 0 & 0 & 1 & 1 & 0 & 0 & 0 & 1 & 0 & 0 & 0 & 0 & 0 & 0 & 0 & 0 & 0 & 0 & 0 & 0 & 0 \\
0 & 0 & 0 & 0 & 0 & 0 & 0 & 0 & 0 & 1 & 0 & 0 & 0 & 1 & 0 & 0 & 0 & 0 & 1 & 0 & 0 & 1 & 0 & 0 \\
0 & 0 & 1 & 0 & 0 & 0 & 0 & 0 & 0 & 0 & 0 & 0 & 1 & 0 & 0 & 1 & 0 & 1 & 0 & 0 & 0 & 0 & 0 & 0 \\
0 & 0 & 0 & 1 & 0 & 0 & 0 & 0 & 0 & 1 & 0 & 0 & 1 & 0 & 0 & 0 & 0 & 0 & 0 & 0 & 1 & 0 & 0 & 0 \\
\end{smallmatrix} \right).
\end{align}}

The logical operators measured using our new technique are shown in Table \ref{tab:logical_operators}. The code parameters and row and column weights of the original codes are presented in~\cref{tab:original_code_parameters}, and the code parameters and weights obtained post-measurement are presented in~\cref{tab:new_code_parameters}.

\begin{table*}
    \centering     
    \begin{tabular}{|c|c|}
        \hline\hline
        Code & Logical operator\\ \hline 
        $\mathrm{LP}_1$ & $X_{30}X_{37}X_{38}X_{39}X_{44}X_{45}X_{47}X_{51}X_{53}X_{54}$ \\
        $\mathrm{LP}_2$ &  $X_{41}X_{42}X_{53}X_{55}X_{56}X_{60}X_{62}X_{63}X_{64}X_{65}X_{68}X_{72}$ \\
        $\mathrm{HGP}_1$ &  $X_{282}X_{286}X_{288}X_{290}X_{292}X_{293}X_{298}X_{300}$ \\
        $\mathrm{HGP}_2$ & $X_{217}X_{219}X_{220}X_{222}X_{224}X_{225}X_{230}X_{232}X_{235}X_{237}$ \\\hline
\hline
    \end{tabular}
     \caption{Logical operators used for fault-tolerant logical measurement simulations.}
    \label{tab:logical_operators}
\end{table*}

\renewcommand{\tabcolsep}{2pt}
\begin{table*}
\centering
\begin{tabular}{|c|c|c|c|c|}
\hline
Code & ($q_X$,$w_X$,$q_Z$,$w_Z$) & ($q$, $w$) & $(q_{X_{avg}}, w_{X_{avg}},q_{Z_{avg}}, w_{Z_{avg}})$ &$(q_{avg}, w_{avg})$\\
\hline\hline
$\mathrm{LP}_1$, $[\![175, 19, 10]\!]$ & (4, 7, 4, 7) & (8, 7) & (3.36, 7, 3.36, 7) & (6.72, 7)\\
$\mathrm{LP}_2$, $[\![225, 21, 12]\!]$ & (4, 7, 4, 7) & (8, 7) & (3.36, 7, 3.36, 7) & (6.72, 7)\\
$\mathrm{HGP}_1$, $[\![625,  25,  8]\!]$ & (4, 7, 4, 7) & (8,  7) & (3.36, 7, 3.36, 7) & (6.72,  7)\\
$\mathrm{HGP}_2$, $[\![900,  36,  10]\!]$ & (4, 7, 4, 7) & (8,  7) & (3.36, 7, 3.36, 7) & (6.72,  7)\\
\hline
\end{tabular}
\caption{Properties of the simulated codes}
\label{tab:original_code_parameters}
\end{table*}

\begin{table*}
\centering
\begin{tabular}{ |c|c|c|c|c|c|}
\hline
Original code &
     Measured code 
 & ($\tilde{q}_X$,$\tilde{w}_X$,$\tilde{q}_Z$,$\tilde{w}_Z$) & ($q$, $w$) & $(\tilde{q}_{X_{avg}}, \tilde{w}_{X_{avg}},\tilde{q}_{Z_{avg}}, \tilde{w}_{Z_{avg}})$ & $(\tilde{q}_{avg}, \tilde{w}_{avg})$\\
\hline\hline
$\mathrm{LP}_1$, $[\![175, 19, 10]\!]$ &  $[\![191, 18, 10]\!]$ & (4, 7, 6, 9) &  (10, 9) &  (3.31, 6.72, 3.39, 7.19) & (6.70, 6.95) \\
$\mathrm{LP}_2$, $[\![225, 21, 12]\!]$ & $[\![245, 20, 12]\!]$ & (4, 7, 7, 12) & (9, 12) & (3.30, 6.73, 3.47, 7.34) & (6.77, 7.03) \\
$\mathrm{HGP}_1$, $[\![625, 25, 8]\!]$ & $[\![638, 24, 8]\!]$ &  (4, 7, 5, 8) & (9, 8) & (3.35, 6.94, 3.37, 7.03) & (6.72, 6.98) \\
$\mathrm{HGP}_2$, $[\![900, 36, 10]\!]$ & $[\![917, 35, 10]\!]$ & (4, 7, 7, 11) & (9, 11) & (3.35, 6.94, 3.39, 7.06) & (6.73, 7.00) \\
\hline
\end{tabular}
\caption{Properties of the codes obtained post fault-tolerant logical measurements.}
\label{tab:new_code_parameters}
\end{table*}

\subsection{Numerical simulations} \label{subsec:numerics}

We benchmark our homological measurement protocol for the various codes presented in~\cref{subsec:examples} via simulation of a photonic architecture based on optical GKP states. The quantum code considered is first foliated (using a number of layers equal to the code distance) to give a cluster state as described in \cite{Bolt2016}. The cluster is generated by sending optical GKP qubits through a passive linear optics circuit comprising of beamsplitters and phase shifters, and homodyne measurements performed on a subset of the modes. The detailed description of the procedure used can be found in \cite{tzitrin2021staticlinear,walshe2024}. We provide a brief discussion of the photonic quantum computing architecture used for the simulations below:
\begin{itemize}
    \item[1)] \textbf{Preparing GKP Bell pair:} A GKP Bell pair is constructed using two GKP sensor states, a 50:50 beamsplitter, and a $\pi/2$ phase shifter. We model the generation of the GKP Bell pair by considering two modes in the ideal GKP sensor state, applying a Gaussian blurring channel of variance $\sigma^2$ that replaces each delta function of the GKP state by a Gaussian distribution of variance $\sigma^2$, and applying the optical circuit comprising of a 50:50 beamsplitter and a $\pi/2$ phase shifter. Throughout this paper, we express the variance of the Gaussian in terms of decibels, $\frac{\sigma^2}{\sigma_{\mathrm{vac}}^2}[\mathrm{dB}]$ $=$ $-10\mathrm{log}_{10}(\sigma^2/\sigma_{\mathrm{vac}}^2)$, where $\sigma_{\mathrm{vac}}^2$ is the variance of the vacuum, which is $1/2$ in units where $\hbar = 1$. This model is equivalent to uniform photon loss experienced throughout the cluster state generation and measurement process.
    \item[2)] \textbf{Preparing the optical resource state:} The optical resource state is obtained by considering the cluster state based on the foliated quantum code, replacing each edge of the cluster state by a GKP Bell pair, and applying a beamsplitter circuit on all the modes that correspond to the same node in the cluster state to obtain the resource state. The set of modes in various GKP Bell pairs corresponding to the same node in the cluster state are called a macronode. The beamsplitter circuit corresponds to performing a continuous-variable GHZ measurement.
    \item[3)] \textbf{Performing measurement-based quantum computation:} The GKP cluster state is obtained from the optical resource state by performing a homodyne measurement on all but one mode in each macronode in the q-quadrature. Homodyne measurements in the p-quadrature are performed on the remaining mode in each macronode as the quantum computer is considered to be operating in quantum error correction mode. The quantum error correction mode involves performing X basis measurements on all qubits in the cluster state. To perform a Z basis measurement when the quantum computer is in another mode, the remaining mode in the macronode should be measured in the q-quadrature. We note that the qubit Pauli error rate of a qubit increases monotonically with variance $\sigma^2$ (decreases monotonically with $\sigma^2/\sigma_{\mathrm{vac}}^2[\mathrm{dB}]$) and the degree of the node in the cluster state.
    \item[4)] \textbf{Applying feedforward corrections:} The homodyne measurements are processed to obtain the GKP qubit measurement value in the GKP cluster state. This involves first processing measurement outcomes in the same mode in the macronode. Each processed measurement outcome is decoded using a soft-in soft-out GKP decoder, which is the binning function described in \cite{tzitrin2021staticlinear} to obtain a bit value and a probability of error. Feedforward corrective displacements are needed to obtain the bit value and probability of error corresponding to the GKP qubits in the GKP cluster state. These are obtained by combining the bit values and probability of error of a particular mode of a macronode with a few modes of the macronode corresponding to its neighbors in the cluster state to obtain the qubit value and the probability of qubit error. 
    \item[5)] \textbf{Syndrome computation and qubit decoding:} The syndrome is computed from these bit values and the qubit weight $w_q$ is computed from the probability of qubit error $p_q$ as $w_q = \mathrm{log}((1-p_q)/p_q)$. The syndrome and the qubit weights are used to perform soft-in hard-out qubit decoding using the foliated quantum code. We use the belief propagation (BP)-ordered statistics decoder (OSD) to perform qubit decoding.
\end{itemize}

We use the min-sum algorithm for BP decoding with $N$ iterations and a flooding schedule, where $N$ is the number of qubits in the foliated cluster state. When the BP decoding does not converge, we use an OSD with combination sweep strategy with search depth parameter $\lambda = 60$. As our homological measurement technique can be used across the class of quantum LDPC codes, we use the BP-OSD decoder, which can be used with all quantum LDPC codes.

We use logical block error rate to quantify the performance of our architecture encoded with the example codes before and after the homological measurements. We perform Monte Carlo simulations to obtain estimates of the logical error rate. 

For the logical block error rate computation, we consider a logical error to have occurred in a trial when any logical qubit has an error after decoding. Denoting the number of times at least a logical error occurs by $n_{\mathrm{fail}}$ and $n_{\mathrm{tot}}$ the total number of trials, using the Agresti-Coull interval \cite{agresti1998approximate,Brown2001}, the logical block error rate is estimated by
\begin{align}
    p_{\text{fail}} = \frac{n_{\mathrm{fail}} + \kappa^2/2}{n_{\mathrm{tot}}+\kappa^2},
\end{align}
and the uncertainties are estimated by
\begin{align}
    \kappa \sqrt{\frac{p_{\mathrm{fail}}(1-p_{\mathrm{fail}})}{n_{\mathrm{tot}} + \kappa^2}},
\end{align}
with $p_{\mathrm{fail}}$ denoting the estimated probability of sustaining a logical block error, and where $\kappa$, the desired quantile of a standard normal distribution, is set to $\kappa = 1.96$ corresponding to a $95\%$ confidence interval.

It is worth emphasizing that the Agresti-Coull interval is used rather than the traditional Wald interval because the Wald interval is simply a bad choice. For a large number of trials, such as the present case, the Agresti-Coull interval is highly performant, while in other situations, both the Jeffreys and Wilson intervals should be considered \cite{Brown2001}.

\begin{figure*}[ht]
    \centering
    \subfloat[{$[\![175, 19, 10]\!]$ lifted product code, $\mathrm{LP}_1$.} \label{subfig:lp1}]{\includegraphics[width=.47\textwidth]{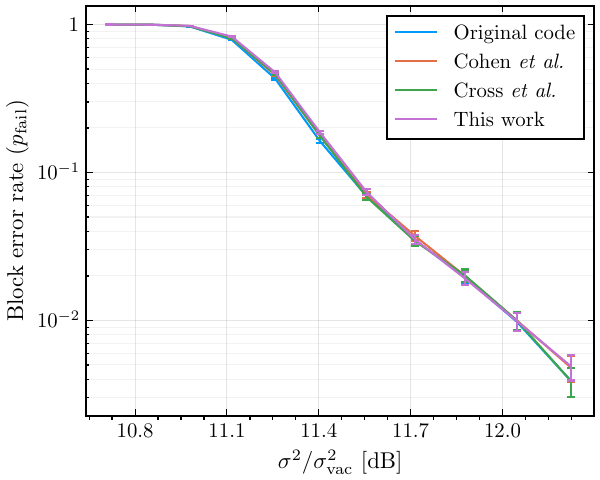}\quad}
    \subfloat[{$[\![225, 21, 12]\!]$ lifted product code, $\mathrm{LP}_2$.} \label{subfig:lp2}]{\quad\includegraphics[width=.47\textwidth]{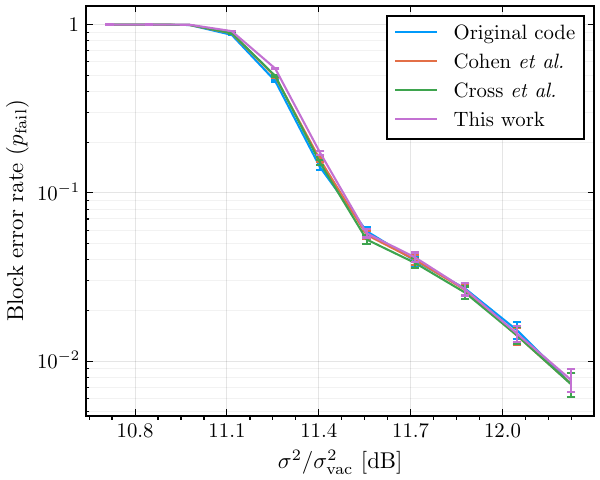}}
    \caption{Block error rates of lifted product codes before and after a logical X measurement.}\label{plot:LP_X}
\end{figure*}

\Cref{plot:LP_X} shows the logical block error rate for the lifted product codes $\mathrm{LP}_1$ and $\mathrm{LP}_2$. Our protocol performs similarly to those presented in~\cite{Cohen2021,cross2024linearsize} while using a smaller number of additional ancilla qubits. Note that the block error rates are also similar to that of the original codes.

\begin{figure*}[ht]
    \centering
    \subfloat[{$[\![625, 25, 8]\!]$ hypergraph product code, $\mathrm{HGP}_1$.} \label{subfig:hgp2}]{\includegraphics[width=0.47\textwidth]{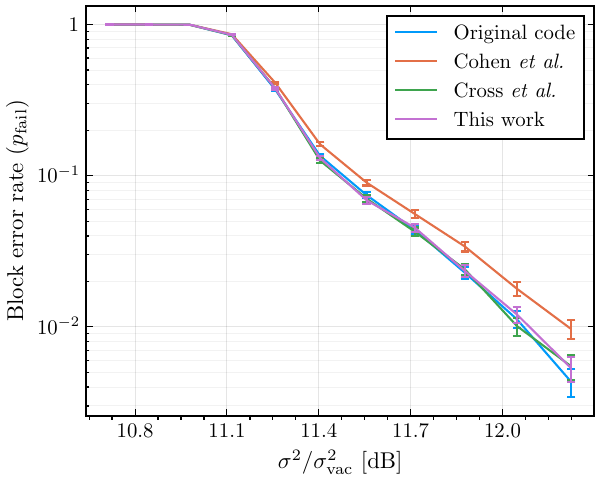}\quad}
    \subfloat[{$[\![900, 36, 10]\!]$ hypergraph product code, $\mathrm{HGP}_2$.} \label{subfig:hgp3}]{\quad\includegraphics[width=0.47\textwidth]{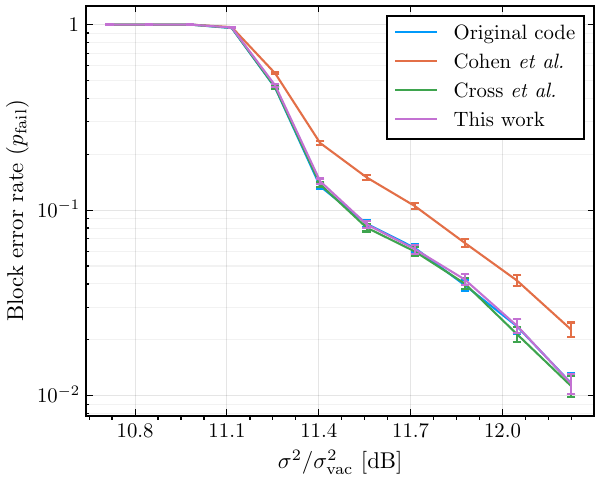}}
    \caption{Performance plot analyzing the block error rates of hypergraph product codes before and after a logical X measurement.}\label{plot:HGP_X}
\end{figure*}

\Cref{plot:HGP_X} shows the logical block error rate of the hypergraph product codes $\mathrm{HGP}_1$ and $\mathrm{HGP}_2$. For these codes, our scheme performs equivalently to the scheme presented in~\cite{cross2024linearsize}, while outperforming generalized lattice surgery~\cite{Cohen2021}. This may be due to the fact that the ancilla code used by generalized lattice surgery repeats a potentially bad structure $d$ times that depends on the specific logical representative being measured, in combination with the fact that the ancilla code in this scheme is significantly larger than for the other protocols considered.

\begin{table*}[ht]
    \centering
    \begin{tabular}{|c|c|c|c|}
    \hline
    Code & $n_{\text{anc}}$, Cohen \emph{et al.} \cite{Cohen2021} & $n_{\text{anc}}$, Cross \emph{et al.} \cite{cross2024linearsize} & $n_{\text{anc}}$, this work \\
    \hline\hline
    $\mathrm{LP}_1$, $[\![175, 19, 10]\!]$ & 230 & 38 & 16 \\
    $\mathrm{LP}_2$, $[\![225, 21, 12]\!]$ & 348 & 48 & 20 \\
    $\mathrm{HGP}_1$, $[\![625,  25,  8]\!]$ & 144 & 30 & 13 \\
    $\mathrm{HGP}_2$, $[\![900,  36,  10]\!]$ & 240 & 40 & 17 \\
    \hline
    \end{tabular}
    \caption{Comparison of the number of ancillas required for different measurement techniques. The logical operators that were measured are given in \cref{tab:logical_operators}.}
    \label{tab:ancilla_comparison}
\end{table*}

\section{Discussion} \label{sec:discussion}

We introduced the notion of homological measurements based on the chain complex description of CSS codes. Several protocols such as lattice surgery, generalized lattice surgery and some optimized variants thereof fall within the framework of homological measurements. The homological measurement framework provides a natural mathematical language that allowed us to devise a new protocol, the edge expanded homological measurement, which provides fault-tolerant measurement of logical Pauli operators with fewer ancilla qubits than in previously known methods such as generalized lattice surgery~\cite{Cohen2021}, its recent refinement~\cite{cross2024linearsize}, and homomorphic measurements~\cite{Huang2023}. Note that our scheme is general in the sense that it can be used for any CSS code. As it doesn't require any specific structure, it is likely that our method can be optimized further for combinations of specific codes and logical operators to be measured, similar to the treatment that was given to the gross code in \cite{cross2024linearsize}.

The recent work~\cite{williamson2024low} uses many related ideas but they take a larger number of ancilla qubits in general in exchange for a provably qLDPC construction. We quickly examine the case where their construction (for a \emph{CSS} code) can contain all cellulations within a single layer of their ancilla system for comparison. Let $\partial_1'$ be the incidence matrix of the graph obtained in \cref{alg:main-construction} before cellulation is applied, and $\partial_1$ be the incidence matrix including the cellulation. Let $\partial_0$, $f_1$, and $f_0$ be obtained as in \cref{alg:main-construction}. While our work results in a merged code with stabilizer matrices
\begin{equation}
    \widetilde H_X = \begin{pmatrix}
        H_X & \\
        f_1^{\mathrm T} & \partial_1^{\mathrm T}
    \end{pmatrix}, \quad \widetilde H_Z = \begin{pmatrix}
        H_Z & f_0 \\
         & \partial_0
    \end{pmatrix},
\end{equation}
this related work \cite{williamson2024low} results in
\begin{align}
    \widetilde H_X &= \begin{pmatrix}
        H_X &  &  & \\
        f_1^{\mathrm T} & {\partial_1'}^{\mathrm T} &  & I \\
         &  & \partial_1^{\mathrm T} & I
    \end{pmatrix}, \\
    \widetilde H_Z &= \begin{pmatrix}
        H_Z & f_0 &  & \\
         & I & \left(I\ 0\right) & \partial_1' \\
         &  & \partial_0 &
    \end{pmatrix},
\end{align}
where the zero-columns of $\begin{pmatrix}I & 0\end{pmatrix}$ correspond to the edges added for cellulation. We note that viewing this within our framework, a minor modification of \cref{thm:x-dist} shows that the mutligraph with all edges from both $\partial_1$ and $\partial_1'$ needs to have a Cheeger constant of 1 to maintain distance in the merged code. In this particular case, it implies that a Cheeger constant of only $0.5$ is necessary for the graph defined by $\partial_1'$, resulting in a useful optimization. When their construction requires more layers, the required Cheeger constant reduces further. Also note that for CSS codes this technique fits nicely within the homological measurement framework.

We performed numerical simulations of a photonic architecture based on the use of GKP qubits which demonstrated the practicality of the protocol. Since increasing the weight and connectivity of stabilizers measurements strongly negatively affects the effective qubit noise in the architecture on which our simulations are based~\cite{tzitrin2021staticlinear,sabo2024weight,walshe2024}, and the simulation results show that our protocol is very robust to noise, this demonstrates that the slight increase in connectivity of a limited number of stabilizer generators is not impactful in terms of logical performance. The strong resilience to noise on ancilla qubits used to perform lattice surgery in the case surface code has been previously observed~\cite{ramette2023faulttolerant}, and our results indicate that this may also be the case for more general classes of homological measurements. We have not analyzed the logical performance of edge expanded homological measurements beyond the results presented in~\cref{sec:numerics}, and in particular, we have made no attempt to optimize with regards of logical operator representative or decoding strategies to take into account the specific structures introduced by our protocol. A natural open question is how to optimize for specific logical operator structures leading to better logical performance. We leave such analysis and optimization for future work.

We have focused on designing protocols minimizing the number of ancilla qubits required, but we have not attempted to reduce the time overhead. Our protocol requires measuring the new stabilizers roughly $d$ times to be able to infer reliably the outcome of the logical measurement performed. However, if the initial codes allows for single-shot quantum error correction~\cite{PhysRevX.5.031043, Campbell_2019, PRXQuantum.2.020340}, one can wonder whether a scheme could be devised such that only a constant number of measurement rounds are needed. Since meta-stabilizer CSS codes can readily be described by longer chain complexes, we conjecture that the homological measurement framework will allow to find protocols reducing the number of measurements round required.

Note that the protocol we propose does not guarantee that an input qLDPC code will remain so, similar to the situation in \cite{cross2024linearsize} where no general bound can be given on the weight of the gauges that are promoted to stabilizers. Our focus was on practical constructions for near term devices, and the connectivity obtained for all specific constructions considered during the development of this work were low in practice. Using the decongestion lemma~\cite{freedman2021building}, we know that the asymptotic bound on connectivity increases polylogarithmically with distance in the worst case, though this worst case was never observed. While we speculate that a modification of \cref{alg:increase-cheeger} can give a provably qLDPC output code when the input is also qLDPC with only minimal extra structural requirements, we leave such exploration for future work. Alternatively, one could apply ideas of quantum weight reduction~\cite{sabo2024weight, hastings2016, hastings2023quantum} on the obtained code to reduce the connectivity to constant values. While such techniques typically incur somewhat large overheads, the number of stabilizers requiring weight reduction might be small enough so that it is still favorable compared to other measurement schemes.

As we presented in the main text, it is straightforward to measure different commuting logical Pauli operators in parallel, though measuring more operators will increase the connectivity of the resulting code in general, which in practice will limit the amount of achievable parallelization. This depends highly on the specific codes as well as logical operator to be measured and must be considered on a case-by-case basis.

\begin{acknowledgments}
We became aware of work by Dominic Williamson and Ted Yoder \cite{williamson2024low} on a similar construction, but we had no knowledge of any details while preparing this manuscript. The first versions of our manuscript and theirs were released simultaneously.

We thank Eric Sabo and Michael Vasmer for valuable discussions, as well as Rafael Alexander for carefully reviewing the manuscript. We thank Alexander Cowtan for noticing an error in a previous version of the manuscript and for his careful review. We also express our gratitude towards Professor John Sipe, for facilitating the access to numerical resources. This research was enabled in part by support provided by Scinet and the Digital Research Alliance of Canada. Computations were performed on the Niagara~\cite{niagara} supercomputer at the SciNet HPC Consortium, as well as on Graham system. SciNet is funded by Innovation, Science and Economic Development Canada; the Digital Research Alliance of Canada; the Ontario Research Fund: Research Excellence; and the University of Toronto.
\end{acknowledgments}

\appendix

\section{The mapping cylinder}\label{app:cylinder}

In addition to the mapping cone which is used in the main text, we also considered the mapping cylinder. This did not result in a useful measurement scheme. We present the construction here, along with the way it which it fails for implementing logical measurements, in the hope that it might be useful in future work.

The \emph{mapping cylinder} of a chain map \(f: \mathcal A \to \mathcal C\) is the chain complex with spaces $\mathrm{cyl}(f)_i = C_i \oplus A_{i-1} \oplus A_i$ and maps $\partial_i^{\mathrm{cyl}(f)}: \mathrm{cyl}(f)_i \to \mathrm{cyl}(f)_{i - 1}$ defined by
\begin{equation}\label{eq:cylinder}
    \begin{gathered}
    \begin{tikzcd}
        {A_{i}} & {A_{i-1}} \\
        {A_{i-1}} & {A_{i-2}} \\
        {C_i} & {C_{i-1}} \\
        {\mathrm{cyl}(f)_i} & {\mathrm{cyl}(f)_{i-1}}
        \arrow["{f_{i-1}}", from=2-1, to=3-2]
        \arrow["{I}", from=2-1, to=1-2]
        \arrow["{\partial^{\mathcal C}_i}"', from=3-1, to=3-2]
        \arrow["{\partial^{\mathcal A}_{i-1}}", from=2-1, to=2-2]
        \arrow["{\partial^{\mathcal A}_{i}}", from=1-1, to=1-2]
        \arrow["{\partial_i^{\mathrm{cyl}(f)}}", from=4-1, to=4-2]
        \arrow["\oplus", phantom, from=2-1, to=3-1]
        \arrow["\oplus", phantom, from=2-2, to=3-2]
        \arrow["\oplus", phantom, from=1-1, to=2-1]
        \arrow["\oplus", phantom, from=1-2, to=2-2]
    \end{tikzcd} \\
    \partial_i^{\mathrm{cyl}(f)} = \begin{pNiceArray}{ccc}[first-row, first-col]
        & C_i & A_{i-1} & A_i \\
        C_{i-1} & \partial^{\mathcal C}_i & -f_{i-1} & 0 \\
        A_{i-2} & 0 & -\partial^{\mathcal A}_{i-1} & 0 \\
        A_{i-1} & 0 & I & \partial_i^{\mathcal A}
    \end{pNiceArray}.
    \end{gathered}
\end{equation}
The minus signs in \(\partial_i^{\mathrm{cyl}(f)}\) can be dropped when working in \(\mathbb F_2\).

Note that we can interpret the mapping cylinder as the mapping cone of a different chain map. Define the chain complex $\mathcal B$ with spaces $B_i = A_i \oplus A_{i + 1}$ and maps
\begin{equation}
    \partial_i^{\mathcal B} = \begin{pmatrix} \partial_i^{\mathcal A} & \\ -I & -\partial_{i + 1}^{\mathcal A}\end{pmatrix}
\end{equation}
and the chain map $g: \mathcal B \to \mathcal C$ given by
\begin{equation}
    g_i = \begin{pmatrix}f_i & 0\end{pmatrix}.
\end{equation}
It can be seen that $\mathrm{cyl}(f) = \mathrm{cone}(g)$.

Suppose that we have a CSS code \(\mathrm{CSS}(H_X, H_Z)\) viewed as a chain,
\begin{equation}
    \mathcal{C}:\ C_2 \xrightarrow{H_X^{\mathrm T}} C_1 \xrightarrow{H_Z} C_0,
\end{equation}
an ancilla chain
\begin{equation}
    \mathcal{A}:\ {A_2 \xrightarrow{\partial_2} A_1 \xrightarrow{\partial_1} A_0 \xrightarrow{\partial_0} A_{-1}},
\end{equation}
and a chain map \(f: \mathcal A \to \mathcal C\). Define the \emph{cylinder code} by the diagram
\begin{equation}\label{cylindercode}
    \begin{tikzcd}
    & {A_2} & {A_1} & {A_0} \\
    & {A_1} & {A_0} & {A_{-1}} \\
	& {C_2} & {C_1} & {C_0} \\
	\arrow["{\partial_2}", from=1-2, to=1-3]
	\arrow["{\partial_1}", from=1-3, to=1-4]
	\arrow["{\partial_1}", from=2-2, to=2-3]
	\arrow["{\partial_0}", from=2-3, to=2-4]
    \arrow["{I}", from=2-2, to=1-3]
    \arrow["{I}", from=2-3, to=1-4]
	\arrow["{H_X^{\mathrm T}}"', from=3-2, to=3-3]
	\arrow["{H_Z}"', from=3-3, to=3-4]
	\arrow["{f_1}", from=2-2, to=3-3]
	\arrow["{f_0}", from=2-3, to=3-4]
    \arrow["\oplus", phantom, from=1-2, to=2-2]
    \arrow["\oplus", phantom, from=1-3, to=2-3]
    \arrow["\oplus", phantom, from=1-4, to=2-4]
    \arrow["\oplus", phantom, from=2-2, to=3-2]
    \arrow["\oplus", phantom, from=2-3, to=3-3]
    \arrow["\oplus", phantom, from=2-4, to=3-4]
    \end{tikzcd}
\end{equation}
which has stabilizer matrices
\begin{equation}\label{cylindercode-mats}
    \widetilde{H}_X = \begin{pmatrix}
        H_X & & \\
        f_1^{\mathrm T} & \partial_1^{\mathrm T} & I \\
        & & \partial_2^{\mathrm T}
    \end{pmatrix}, \quad
    \widetilde{H}_Z = \begin{pmatrix}
        H_Z & f_0 & \\
            & \partial_0 & \\
            & I & \partial_1
    \end{pmatrix}.
\end{equation}
Note that since \(\partial_0\partial_1 = 0\), all rows resulting from \(\partial_0\) in \(\widetilde H_Z\) are redundant, so we can instead choose
\begin{equation}
    \widetilde{H}_Z = \begin{pmatrix}
        H_Z & f_0 & \\
            & I & \partial_1
    \end{pmatrix}.
\end{equation}

Since this is a special case of the cone, we know that \(Z\)-distance is maintained by \cref{thm:z-dist}. If \(f_1\), \(\partial_1\), and $f_0$ are given as in \cref{eqn:f1_def,eqn:p1*_def,eqn:f0*_def} (without any improvements to the Cheeger constant) and \(\dim\ker\partial_1 = 1\), it can be seen that the \(X\)-distance is maintained without any other restrictions, and in particular, this holds regardless of the Cheeger constant of the graph defined by treating $\partial_1$ as an incidence matrix. Also, since \(\partial_1\partial_2 = 0\), there are only two choices for \(\partial_2\) in this case. It is either 0, or the all-ones vector. In the latter case, this has the same weight as the logical to measure and so it is not useful, and in the former case, the logical we wished to measure is not in the row space of \(\widetilde H_X\).

Note that by choosing \(\partial_2 = 0\) and promoting the gauge this creates to a \(Z\)-stabilizer, the \(L=2\) construction from \cite{cross2024linearsize} is obtained (though, because the \(X\)-distance is maintained regardless, there is no need to promote the gauge to a stabilizer). For larger even values of \(L\), there continues to be no way to get the logical into the row space of \(\widetilde H_X\). Note also that for even values of \(L\), there is no reason to go beyond \(L=2\) because the purpose of increasing \(L\) is to maintain distance. We do not believe that this construction allows for a stand-alone measurement protocol. While the mapping cylinder is not useful for directly measuring logical operators, the structure is related to two joint measurement schemes~\cite{zhang2024timeefficient,swaroop2024universaladaptersquantumldpc}.

It is our hope, inspired by the utility of the mapping cone in the main text, that formalizing the idea of the mapping cylinder may also be of use in future applications.

\bibliographystyle{apsrev4-2}
\bibliography{refs}

\end{document}